\newcommand{\C}{\mathcal{C}}
\newcommand{\NameCong}{\mathsf{NameCong}}
\newcommand{\Gm}{\mathsf{G}}
\newcommand{\Paste}{\mathsf{Paste}}
\newcommand{\Name}{\mathsf{Name}}
\newcommand{\Log}{\mathcal{L}}
\newcommand{\PV}{\mathsf{P}}
\newcommand{\Nom}{\mathsf{N}}
\newcommand{\hearts}{\heartsuit}
\newcommand{\Form}{\mathcal{F}}
\newcommand{\Set}{\mathsf{Set}}
\newcommand{\inv}{^{-1}}
\newcommand{\lsem}{\llbracket}
\newcommand{\rsem}{\rrbracket}
\newcommand{\Pow}{\mathcal{P}}
\newcommand{\CPow}{\mathcal{Q}}
\newcommand{\op}{^\mathit{op}}
\newcommand{\otto}{\leftrightarrow} % (\to)^op \to
\newcommand{\Hilb}{\mathsf{H}}
\newcommand{\entails}{\vdash}
\newcommand{\FA}{\mathfrak{A}}
\newcommand{\modelsOS}{\models}
\newcommand{\CondArrow}{\Rightarrow}
\newcommand{\CF}{\mathit{Cf}}
\theoremstyle{definition}
       \newtheorem{notn}{Notation}
\newcounter{blubber}
\newenvironment{sparenumerate}
{\begin{list}
  {\arabic{blubber}.}
  {\usecounter{blubber}
   \setlength{\leftmargin}{0pt}
    \setlength{\parsep}{0pt}
    \setlength{\itemindent}{4ex}
    \setlength{\itemsep}{2pt}
  }
}
{\end{list}}
\newcommand{\mi}[1]{\mathit{#1}}
\newlength{\croutw}
\newlength{\crouth}
\newcommand{\crossout}[1]%
        {\settowidth{\croutw}{$#1$}\settoheight{\crouth}{$#1$}#1%
        \hspace{-1.0\croutw}\raisebox{0.3\crouth}{\rule{\croutw}{0.1ex}}}
\newcommand{\commentout}[1]{\ignorespaces}
\newcommand{\infrule}[2]{\frac{#1}{#2}}
\newcommand{\bit}{\begin{itemize}}
\newcommand{\eit}{\end{itemize}}
\newcommand{\Cls}{\mathcal}
\newcommand{\Opname}{\mathrm}
\newcommand{\colim}{{\Opname{colim}\,}}
\newcommand{\lrule}[3]{(#1)\;\;\infrule{#2}{#3}}
\newcommand{\CT}{{\Cls T}}
\newcommand{\powerset}{{\mathcal P}}
\newcommand{\argument}{\_\!\_}%{\underline{\;\;}}
\newcommand{\Sem}[1]{{[\![#1]\!]}}
\newcommand{\modimpl}{\to}
\newcommand{\modiff}{\leftrightarrow}
\newcommand{\Pfin}{\mathcal P_{\mathit{fin}}}
\newlength{\myboxwidth}
\newcommand{\Lang}{\mathcal{L}}	
\newcommand{\FLang}{\mathcal{F}}
\newcommand{\gldiamond}[1]{\Diamond_{#1}}
\newcommand{\Nat}{{\mathbb{N}}}
\newcommand{\Rat}{{\mathbb{Q}}}
\newcommand{\PDist}{D_\omega}%% Provisionary!!
\newcommand{\Bag}{\mathcal{B}}
\newcommand{\Prop}{\mathsf{Prop}}
\newcommand{\PSPACE}{\ensuremath{\mathit{PSPACE}}\xspace}
\newcommand{\Rules}{\mathcal{R}}
\newcommand{\CK}{\mathit{CK}}
\newcommand{\Ax}{\mathcal{A}}
\newcommand{\PLentails}{\entails_{\mi{PL}}}
\newcommand{\Space}{\mathcal{S}}
\newcommand{\osder}[2]{\entails_{#1}^{#2}}
\newcommand{\gentails}[1]{\entails^{#1}_g}
\newcommand{\eat}[1]{}
\begin{document}

\title{Named Models in Coalgebraic Hybrid Logic}
\author[DFKIUHB]{L. Schr{\"o}der}{Lutz
  Schr{\"o}der}
\address[DFKIUHB]{DFKI Bremen and Department of Computer Science,  
Universit\"at  Bremen}
\email{Lutz.Schroeder@dfki.de}
\author[IC]{D. Pattinson}{Dirk Pattinson}
\address[IC]{Department of Computing, 
Imperial College London}

\email{dirk@doc.ic.ac.uk}

% \thanks{Work of the first author performed as part of the DFG project
%   \emph{Generic Algorithms and Complexity Bounds in Coalgebraic Modal
%     Logic} (SCHR 1118/5-1). Work of the second author partially
%   supported by EPSRC grant EP/F031173/1}

\keywords{Logic in computer science, semantics, deduction, modal
  logic, coalgebra} 

\subjclass{F.4.1 [Mathematical Logic and Formal Languages]:
  Mathematical Logic --- modal logic; I.2.4 [Artificial Intelligence]:
  Knowledge Representation Formalisms and Methods --- modal logic,
  representation languages}

% \pagestyle{headings}

%\title{Compositional Algorithms For\\ Multisorted Coalgebraic Modal Logic}

\begin{abstract}
  Hybrid logic extends modal logic with support for reasoning about
  individual states, designated by so-called nominals. We study hybrid
  logic in the broad context of coalgebraic semantics, where
  Kripke frames are replaced with coalgebras for a given functor, thus
  covering a wide range of reasoning principles including, e.g.,
  probabilistic, graded, default, or coalitional
  operators. Specifically, we establish generic criteria for a given
  coalgebraic hybrid logic to admit named canonical models, with
  ensuing completeness proofs for pure extensions on the one hand, and
  for an extended hybrid language with local binding on the other.
  We instantiate our framework with a number of examples.
  Notably, we prove completeness of graded hybrid logic with local
  binding.
\end{abstract}

\maketitle             

%{\tt TODO:
%\begin{itemize}
%\item check (possibly add) definitions of one-step consistent and
%one-step satisfiable
%\item emphasize global assumptions
%\end{itemize}}

% \begin{abstract}
% This paper studies hybrid reasoning in the framework of coalgebraic
% modal logic. We extend coalgebraic logics with both \emph{nominals}
% and \emph{satisfaction operators} that 
% facilitate reasoning about individual states in a model. We present
% both a complete axiomatisation of a Hilbert system for
% hybrid coalgebraic reasoning and an equivalent, cut-free sequent
% system. As a consequence of cut-elimination, we are moreover able to
% establish that the decision problem for hybrid coalgebraic logics
% can be solved in polynomial space. As a consequence, concrete instantiations of the
% general framework witness PSPACE-bounds for a large variety of modal
% logics, and not only re-prove the complexity bound for hybrid
% extensions of the modal logic $K$, but also witness new complexity
% bounds e.g. for the hybrid logic of coalitions and probabilistic
% modal logic, extended with nominals and satisfaction operators.
% \end{abstract}

\section*{Introduction}

\noindent Modal logics have traditionally played a central role in
Computer Science, appearing, e.g., in the guise of temporal logics,
program logics such as PDL, epistemic logics, and later as description
logics. The development of modal logics has seen extensions along (at
least) two axes: the enhancement of the expressive power of basic
(relational) modal logic on the one hand, and the continual extension,
beyond the purely relational realm, of the class of structures
described using modal logics on the other hand.  Hybrid logic falls
into the first category, extending modal logic with the ability to
reason about individual states in models.  This feature, originally
suggested by Prior and first studied in the context of tense logics
and PDL (see~\cite{BlackburnTzakova99} for references), is of
particular relevance in knowledge representation languages and as such
has found its way into modern description logics, where it is denoted
by the letter $\mathcal{O}$ in the standard naming
scheme~\cite{BaaderEA03}.

Extensions along the second axis -- semantics beyond Kripke structures
and neighbourhood models -- include various probabilistic modal
logics, interpreted over probabilistic transition systems, graded
modal logic over multigraphs~\cite{DAgostinoVisser02}, conditional
logics over selection function frames~\cite{Chellas80}, and coalition
logic~\cite{Pauly02}, interpreted over so-called game frames. As a
unifying semantic bracket covering all these logics and many further
ones, coalgebraic modal logic has emerged (\cite{CirsteaEA09} gives a
survey). The scope of coalgebraic modal logic has recently been
expanded to encompass nominals; we refer to the arising class of
logics as \emph{coalgebraic hybrid logics}. Existing results include a
finite model result, an internalized tableaux calculus, and generic
\PSPACE upper bounds, but are so far limited to logics that exclude
frame conditions and local binding~\cite{MyersEA09}. What is missing
from this picture technically is a theory of \emph{named canonical
  models}~\cite{BlackburnTzakova99}. Named canonical models yield not
only strong completeness of the basic hybrid logic, but also
completeness of \emph{pure extensions}, defined by axioms that do not
contain propositional variables (but may contain nominals; e.g.\ in
Kripke semantics, the pure axiom $\Diamond\Diamond i \modimpl\Diamond
i$, with $i$ a \emph{nominal}, defines transitive frames).  Moreover,
named canonical models establish completeness for an extended hybrid
language with a local binding operator $\downarrow x.\,\phi(x)$, read
as ``the current state $x$ satisfies $\phi(x)$''. Both pure extensions
and the language with $\downarrow$ (not addressed in~\cite{MyersEA09})
are, in general, undecidable~\cite{Areces99} (it should be noted,
however, that fragments of the language with $\downarrow$ over Kripke
frames are decidable and as such play a role, e.g., in conjunctive
query answering in description logic~\cite{GlimmEA06}). As a
consequence, completeness of pure extensions and local binding is the
best we can hope for -- it establishes recursive enumerability of the
set of valid formulas, and it enables automated reasoning, if not
decision procedures.

Specifically, we establish two separate criteria for the existence of
named models. Although these criteria are (in all likelihood
necessarily) less widely applicable than some previous coalgebraic
results including those of~\cite{MyersEA09}, the generic results allow
us to establish new completeness results for a wide variety of logics;
in particular, we prove strong completeness of graded hybrid logic,
and ultimately an extension of the description logic $\mathcal{SHOQ}$,
with the $\downarrow$ binder over a wide variety of frame classes.

\section{Coalgebraic Hybrid Logic}

\noindent To make our treatment parametric in the syntax, we fix a modal
similarity type $\Lambda$ consisting of modal operators with
associated arities throughout.  For given countably infinite and
disjoint sets $\PV$ of propositional variables and $\Nom$ of nominals,
the set $\Form(\Lambda)$ of \emph{hybrid $\Lambda$-formulas} is given
by the grammar
\[ \Form(\Lambda) \ni \phi, \psi ::=  p \mid i \mid
  \phi \land \psi \mid \lnot \phi \mid \hearts(\phi_1, \dots, \phi_n)
	\mid @_i \phi
\]
where $p \in \PV$, $i \in \Nom$ and $\hearts \in \Lambda$ is an
$n$-ary modal operator. (Alternatively, we could regard
\emph{propositional variables} as nullary modal operators, thus
avoiding their explicit mention altogether. We keep them explicit
here, following standard practice in modal logic, as we have to deal
with valuations anyway due to the presence of nominals.)
% The set $\Form_\downarrow(\Lambda)$ is defined as above, but
% additionally comprises formulas $\downarrow i.  \phi$ whenever $i
% \in \Nom$ and $\phi \in \Form_\downarrow(\Lambda)$.
We use the standard definitions for the other propositional
connectives $\modimpl,\modiff,\lor$. The set of nominals occurring in
a formula $\phi$ is denoted by $\Nom(\phi)$, similarly for sets of
formulas. A formula of the form $@_i\phi$ is called an
\emph{$@$-formula}.  Semantically, nominals $i$ denote individual
states in a model, and $@_i\phi$ stipulates that $\phi$ holds at state
$i$.

To reflect parametricity also semantically, we equip hybrid logics
with a \emph{coalgebraic semantics} extending the standard coalgebraic
semantics of modal logics~\cite{Pattinson03}: we fix throughout a
\emph{$\Lambda$-structure} consisting of an endofunctor $T: \Set \to
\Set$ on the category of sets, together with an assignment of an
\emph{$n$-ary predicate lifting} $\lsem \hearts \rsem$ to every
$n$-ary modal operator $\hearts \in \Lambda$, i.e.\ a set-indexed
family of mappings $(\lsem \hearts \rsem_X: \Pow(X)^n \to \Pow(TX))_{X
  \in \Set}$ that satisfies
\[ \lsem \hearts \rsem_X \circ (f\inv)^n = (Tf)\inv \circ \lsem
\hearts \rsem_Y \] for all $f: X \to Y$. In categorical terms,
$\Sem{\hearts}$ is a natural transformation $\CPow^n \to \CPow \circ
T\op$ where $\CPow: \Set\op \to \Set$ is the contravariant powerset
functor.

In this setting, $T$-coalgebras play the roles of \emph{frames}. A
\emph{$T$-coalgebra} is a pair $(C, \gamma)$ where $C$ is a set of
\emph{states} and $\gamma: C \to TC$ is the \emph{transition
  function}. When $\gamma$ is clear from the context, we refer to $(C,
\gamma)$ just as $C$. A \emph{(hybrid) $T$-model} $M = (C, \gamma,
V)$ consists of a $T$-coalgebra $(C,\gamma)$ together with a
\emph{hybrid valuation} $V$, i.e. a map $\PV \cup \Nom \to \Pow(C)$
that assigns singleton sets to all nominals $i \in \Nom$.  We say that
$M$ is \emph{based} on the frame $(C, \gamma)$. The singleton set
$V(i)$ is tacitly identified with its unique element.

The semantics of $\FLang(\Lambda)$ is a satisfaction relation
$\models$ between states $c\in C$ in hybrid $T$-models $M = (C,
\gamma, V)$ and formulas $\phi \in \Form(\Lambda)$, inductively
defined as follows. For $x \in \Nom \cup \PV$ and $i \in \Nom$, 
\begin{equation*}
  M, c \models x \mbox{ iff } c \in V(x)\qquad\text{ and }\qquad
  M, c \models @_i \phi \mbox{ iff } M,V(i) \models \phi.
\end{equation*}
Modal operators are interpreted using their associated predicate
liftings, that is,
\begin{equation*}
  M, c \models \hearts(\phi_1, \dots, \phi_n) \iff
  \gamma(c) \in \lsem \hearts \rsem_C (\lsem \phi_1 \rsem_M, \dots,
  \lsem \phi_n \rsem_M)
\end{equation*}
where $\hearts \in \Lambda$ is $n$-ary and $\lsem \phi \rsem_M =
\lbrace c \in C \mid M,c \models \phi \rbrace$ denotes the truth-set
of $\phi$ relative to $M$.  
%, and \emph{valid}, written
%$\models\phi$, if it is satisfied in all states in all models.
We write $M\models\phi$ if $M,c\models\phi$ for all $c\in C$. For a
set $\Phi \subseteq \Form(\Lambda)$ of formulas, we write $M,c \models
\Phi$ if $M, c \models \phi$ for all $\phi \in \Phi$, and
$M\models\Phi$ if $M\models\phi$ for all $\phi\in\Phi$. We say that
$\Phi$ is \emph{satisfiable} in a model $M$ if there exists a state
$c$ in $M$ such that $M,c\models\Phi$. If $\Ax \subseteq
\Form(\Lambda)$ is a set of axioms, also referred to as \emph{frame
  conditions}, a frame $(C, \gamma)$ is an \emph{$\Ax$-frame} if $(C,
\gamma, V) \models \phi$ for all hybrid valuations $V$ and all $\phi
\in \Ax$, and a model is an \emph{$\Ax$-model} if it is based on an
$\Ax$-frame. A frame condition is \emph{pure} if it does not contain
any propositional variables (it may however contain nominals). We
recall notation from earlier work:
\begin{notn} As usual, application of substitutions
  $\sigma:\PV\to\Form(\Lambda)$ to formulas $\phi$ is denoted
  $\phi\sigma$.  For a set $\Sigma$ of formulas and a set $O$ of
  operators, we write $O\Sigma$ or $O(\Sigma)$ for the set of formulas
  arising by prefixing elements of $\Sigma$ with an operator from $O$;
  e.g.\ $\Lambda(\Sigma)= \lbrace \hearts(\phi_1, \dots, \phi_n) \mid
  \hearts\in\Lambda \textrm{ $n$-ary}, \phi_1, \dots, \phi_n \in
  \Sigma \rbrace$ and $@\Sigma:=\{@_i\mid
  i\in\Nom\}(\Sigma)=\{@_i\phi\mid
  i\in\Nom,\phi\in\Sigma\}$. Moreover, $\Prop(Z)$ denotes the set of
  propositional combinations of elements of some set $Z$.  For
  $\phi\in\Prop(Z)$, we write
  %$\kappa\models\phi$ if
  %$\kappa:\PV\cup\Nom\to 2=\{\bot,\top\}$ is a satisfying valuation
  %for $\phi$, and 
  $X, \tau \models \phi$ if $\phi$ evaluates to $\top$ in the boolean
  algebra $\Pow(X)$ under a valuation $\tau: Z \to \Pow(X)$. For
  $\psi\in\Prop(\Lambda(Z))$, the interpretation $\lsem \psi
  \rsem_{TX, \tau}$ of $\psi$ in the boolean algebra $\Pow(TX)$ under
  $\tau$ is the inductive extension of the assignment $\lsem \hearts
  (p_1, \dots, p_n ) \rsem_{TX, \tau} = \lsem \hearts \rsem_X(
  \tau(p_1), \dots, \tau(p_n))$.  We write $TX, \tau \models \psi$ if
  $\lsem \psi \rsem_{TX, \tau} = TX$, and $t\models_{TX,\tau}\psi$ if
  $t\in\lsem \psi \rsem_{TX, \tau}$. A set of formulas $\Xi \subseteq
  \Prop(\Lambda(Z))$ is \emph{one-step satisfiable} w.r.t.\ $\tau$ if
  $\bigcap_{\phi \in \Xi} \lsem \phi \rsem_{TX, \tau} \neq
  \emptyset$. We occasionally apply this notation to sets $Z\subseteq
  \Pow(X)$ with $\tau$ being just inclusion, in which case mention of
  $\tau$ is suppressed.
\end{notn}
\noindent In the sequel, we will be interested in both \emph{local}
and \emph{global} semantic consequence, where local consequence refers
to satisfaction in a single state and global consequence to
satisfaction in entire models. In fact, we consider local reasoning
under global assumptions: given a set $\Phi\subseteq\Form(\Lambda)$ of
global assumptions (a \emph{TBox} in description logic terminology)
and a class $\C$ of models, we say that \emph{$\phi$ is a local
  consequence of $\Psi$ under global assumptions $\Phi$ for
  $\C$-models}, in symbols $\Phi;\Psi \models^\C \phi$, if for all
$M\in\C$ such that $M\models\Phi$, $M, c \models \phi$ whenever $M, c
\models \Psi$ (here, both $\Phi$ and $\Psi$ are sets of arbitrary
formulas, in particular not subject to any restrictions on the nesting
depth of modal operators). The standard notions of local and global
consequence are regained from this general definition by taking $\Phi$
or $\Psi$ to be empty, respectively.

% In this notation, \emph{$\phi$ is a global
%   consequence of $\Phi$ under $\Ax$}, written as $\Phi \models^\Ax_g
% \phi$, if $\Phi;\emptyset\models^\Ax\phi$, i.e.\ if $M \models \phi$
% whenever $M \models \Phi$ for all $\Ax$-models $M$ -- in words, $\phi$
% is valid in all $\Ax$-models that globally satisfy $\Phi$. We omit
% mention of $\Ax$ when $\Ax$ is empty. Elimination of local assumptions
% will use the following lemma.
% \begin{lemma}\label{lemma:local-global}
%   Let $\Phi,\Psi,\Ax \subseteq \Form(\Lambda)$, let $\phi \in
%   \Form(\Lambda)$, and let $i$ be a nominal that does not occur in
%   $\Phi$, $\Psi$ or $\phi$
% . Then
%  \[ \Phi;\Psi \models^\Ax \phi \quad\text{ iff }\quad \Phi\cup @_i \Psi
%  \models_g^\Ax @_i \phi.
%\]
% \end{lemma}
The distinguishing feature of the coalgebraic approach to hybrid and
modal logics is the parametricity in both the logical language and the
notion of frame: concrete instantiations of the general framework, in
other words a choice of modal operators $\Lambda$ and a
$\Lambda$-structure $T$, capture the syntax and semantics of a wide
range of modal logics, as witnessed by the following examples.

\begin{exas}\label{expl:logics}
\begin{sparenumerate}
\item The hybrid version of the modal logic $K$, \emph{hybrid $K$} for
  short, has a single unary modal operator $\Box$, interpreted over
  the structure consisting of the powerset functor $\Pow$ (which takes
  a set $X$ to its powerset $\Pow(X)$) and the predicate lifting
  $\lsem \Box \rsem_X(A) = \lbrace B \in \Pow(X) \mid B \subseteq A
  \rbrace$. It is clear that $\Pow$-coalgebras $(C, \gamma: C \to
  \Pow(C))$ are in 1-1 correspondence with Kripke frames, and that the
  coalgebraic definition of satisfaction specializes to the usual
  semantics of the box operator.
\item \emph{Graded hybrid logic} has modal operators $\gldiamond{k}$
  `in more than $k$ successors, it holds that'. It is interpreted over
  the functor $\Bag$ that takes a set $X$ to the set
  $\Bag(X)=X\to\Nat\cup\{\infty\}$ of multisets over $X$ by
  $\Sem{\gldiamond{k}}_X(A)=\{B\in\Bag(X)\mid \sum_{x\in A}B(x) >
  k\}$. This captures the semantics of graded modalities over
  \emph{multigraphs}~\cite{DAgostinoVisser02}, which are precisely the
  $\Bag$-coalgebras. A more general set of operators is that of
  \emph{Presburger logic}~\cite{DemriLugiez06}, which admits integer
  linear inequalities $\sum a_i\cdot\#(\phi_i)\ge k$ among formulas.
  Unlike in the purely modal case~\cite{Schroder07}, hybrid multigraph
  semantics visibly differs from the more standard Kripke semantics of
  graded modalities, as the latter validates all formulas
  $\lnot\gldiamond{1}i$, $i\in\Nom$.  However, both semantics agree if
  we additionally stipulate $\lnot\gldiamond{1}i$ as a global (pure)
  axiom.  Thus, our completeness %and complexity
  results for multigraph semantics derived below do transfer to Kripke
  semantics. In particular they apply to many description logics,
  which commonly feature both nominals and graded modal operators in
  the guise of \emph{qualified number restrictions}.  \iffalse
\item \emph{Probabilistic hybrid logic}, the hybrid extension of
  probabilistic modal logic~\cite{LarsenSkou91}, has modal operators
  $L_p$ `in the next step, it holds with probability at least $p$
  that', for $p\in[0,1]\cap\Rat$. It is interpreted over the functor
  $\PDist$ that maps a set $X$ to the set of finitely-supported
  probability distributions on $X$ by putting
  $\Sem{L_p}_X(A)=\{P\in\PDist(X)\mid PA\geq p\}$. Coalgebras for
  $\PDist$ are just Markov chains. A simple valid formula of hybrid
  probabilistic logic is $L_pi\land L_qj \modimpl L_qi\lor L_1(\lnot
  (i\land j))$, $i,j\in\Nom$.
\fi
\item \emph{Hybrid $\CK$}, the hybrid extension of the basic
  conditional logic $\CK$, has a single binary modal operator
  $\CondArrow$, written in infix notation. Hybrid $\CK$ is interpreted
  over the functor $\CF$ that maps a set $X$ to the set
  $\Pow(X)\to\Pow(X)$, whose coalgebras are selection function
  models~\cite{Chellas80}, by putting $\Sem{\CondArrow}_X(A,B)=
  \{f:\powerset(X)\to\powerset(X)\mid f(A)\subseteq B\}$.
\item\label{item:classical} \emph{Classical hybrid logic} (the hybrid
  version of the logic $E$ of neighbourhood frames, referred to as
  (the minimal) classical modal logic in~\cite{Chellas80}) has a
  single, unary modal operator $\Box$ and is interpreted over
  \emph{neighbourhood frames}, that is, coalgebras for the functor $NX
  = \Pow(\Pow(X))$ (more precisely, the double contravariant powerset
  functor). The semantics of classical modal logic is defined by the
  lifting $\lsem \Box \rsem_X (A) = \lbrace S \in N X \mid A \in S
  \rbrace$. \emph{Monotone hybrid logic} has the same similarity type,
  but is interpreted over upwards closed neighbourhood frames, or
  coalgebras for the functor $MX = \lbrace S \in N X \mid S \mbox{
    upwards closed} \rbrace$ where upwards closure refers to subset
  inclusion.
  %The structure for
  %monotone modal logic is defined in the same way as for classical
  %modal logic.
\item The syntax  of coalition logic over a set $N$ of agents is
given by the similarity type $\lbrace [C] \mid C \subseteq N
\rbrace$, and the operator $[C]$ reads as ``coalition $C$ has a
joint strategy to enforce \dots''. The formulas of (hybrid)
coalition logic are interpreted over game frames, i.e.,
coalgebras for the functor
\[ \Gm (X) = \lbrace (f, (S_i)_{i \in N}) \mid\textstyle \prod_{i \in N} S_i
\neq \emptyset, f: \prod_{i \in N} S_i \to X \rbrace
\]
(a class-valued functor, technically speaking, which however does not
cause problems).  The semantics arises via the liftings
\[
  \lsem [C] \rsem_X(A) = \lbrace (f, (S_i)_{i \in N}) \in \Gm (X)
	  \mid 
		  \exists (s_i)_{i \in C} \forall (s_i)_{i \in N \setminus C} (
	  f((s_i)_{i \in N}) \in A \rbrace.
\]
%where $C \subseteq N$.
\end{sparenumerate}
\end{exas}

%\section{Deduction in Hybrid Logics}\label{sec:hilbert}

\noindent We proceed to present a Hilbert-style proof system for
coalgebraic hybrid logics, which we prove to be sound and strongly
complete. This requires that the logic at hand satisfies certain
coherence conditions between the axiomatization and the semantics ---
in fact the \emph{same} conditions as in the purely modal case, which
are easily verified \emph{local} properties that can be verified
without reference to $T$-models and are already known to hold for a
large variety of logics~\cite{Pattinson03,Schroder07}.

Proof systems for coalgebraic logics are most conveniently described
in terms of one-step rules, as follows.
%, for which the announced
%coherence conditions take the following form.

\begin{defi}
A \emph{one-step rule} over $\Lambda$ is 
%an $n+1$-tuple $r =
%(\phi_1, \dots, \phi_n, \phi_0)$, written as
a rule $\phi/\psi$ where $\phi\in \Prop(\PV)$ and
$\psi\in\Prop(\Lambda(\PV))$ (in fact, $\psi$ may be restricted to be
a disjunctive clause, which however is not relevant here).  The rule
$\phi/ \psi$ is \emph{one-step sound} if $TX, \tau \models \psi$
whenever $X, \tau \models \phi$ for a valuation $\tau:\PV\to\Pow(X)$.
%A set $\Rules$ of one-step rules is
%\emph{one-step complete} if, whenever $TX, \tau \models \psi$
%for $\psi \in \Prop(\Lambda(\PV))$ then
%$\psi$ is a propositional consequence of formulas
%$\phi_0\sigma$ 
%where $\phi_1 \dots \phi_n / \phi_0$ is
%in $\Rules$ and $\sigma: \PV \to \Prop(\PV)$ is a substitution such that
%$X, \tau \models \phi_i \sigma$ for all $1 \leq  i \leq n$. 
% If moreover
% $\bigvee\Delta$ is always propositionally entailed by a single such
% formula $\bigvee\Gamma_0\sigma$, in which case
% $\Gamma_0\sigma\subseteq\Delta$ unless $\Delta$ is a tautology, then
% $\Rules$ is \emph{one-step cut-free complete}.
Given a set $\Rules$ of one-step rules and a valuation $\tau: \PV \to
\Pow(X)$, a set $\Xi \subseteq \Prop(\Lambda(\PV))$ is \emph{one-step
  consistent}~\cite{SchroderPattinson08b} if the set $\Xi\cup\lbrace \psi
\sigma \mid\sigma:\PV\to\Prop(\PV);\phi/ \psi \in
\Rules; X, \tau \models \phi \sigma\rbrace$ is
propositionally consistent.
\end{defi}
\noindent One-step sound rules are sound, and we will assume
one-step soundness tacitly in the sequel.
Completeness hinges on
variants of the notion of one-step completeness~\cite{Schroder07},
which we define further below.  As the notion of one-step rule does
not involve hybrid features, suitable rule sets can just be inherited
from the corresponding modal systems; for graded logics, conditional
logics, and many others, such rule sets are found, e.g.,
in~\cite{SchroderPattinson08,SchroderPattinson07}. We recall that the
one-step complete rule set for (hybrid) $K$ consists of the rules
  \begin{equation*}
	  \infrule{a}{\Box a} \qquad \infrule{a \land b \to c}{\Box a
		\land \Box b \to \Box c}\;.
    %\infrule{\lnot a_1,\dots,\lnot a_n,b}{\lnot\Box a_1,\dots,\lnot\Box a_n,\Box b}
    %\;(n\geq 0)
  \end{equation*}
  A set $\Rules$ of one-step rules now gives rise to a Hilbert system
  $\Log\Rules$ by adjoining propositional tautologies and the hybrid
  axioms, and closing under modus ponens, rule application, and
  $@$-necessitation. Formally, we write $\Phi \entails_{\Log\Rules}
  \phi$ for a set $\Phi$ of formulas, the \emph{global assumptions}
  (or the \emph{TBox}), and a formula $\phi$ if $\phi$ is contained in
  the smallest set that
\begin{itemize}
\item contains $\Phi$ and all instances of propositional tautologies
\item contains all instances of $@$-introduction $i \land \phi \to @_i \phi$ and make-or-break
\begin{equation*}
(\mathrm{mob})   \quad @_i p \to (\hearts(q_1, \dots, q_n)
  \otto \hearts(@_i p \land q_1, \dots, @_i p \land q_n))
\end{equation*}
together with all instances of the axioms $\lnot @_i\bot$, $\lnot @_i
\phi \otto @_i \lnot \phi$,
$@_i(\phi\land\psi)\modiff(@_i\phi\land@_i\psi))$, $@_ii$, $@_i j
\otto @_j i$, $@_i k \land @_j p \to @_i p$; and
\item is closed under instances of $@$-generalization $p  / @_i
p$, instances of rules in $\Rules$, and modus ponens.
\end{itemize}
\noindent The second group of axioms ensures that $i\sim j:\equiv
@_ij$ defines an equivalence relation on nominals and that $@_i$
distributes over propositional connectives.  The (mob) axiom captures
the fact that the truth set of an $@$-formula is either empty or the
whole model; in the case of hybrid $K$, it is equivalent to the
standard back axiom $@_i\phi\modimpl\Box @_i\phi$.

We write $\Phi; \Psi \entails_{\Lang\Rules} \phi$ if there are
$\psi_1, \dots, \psi_n\in\Psi$ such that
$\Phi\entails_{\Lang\Rules}\psi_1 \land \dots \land
\psi_n\modimpl\phi$. That is, $\Phi; \Psi \entails_{\Lang\Rules }\phi$
if there is a proof of $\phi$ from global assumptions $\Phi$ that
additionally assumes $\Psi$ \emph{locally}. As we assume that all
one-step rules in $\Rules$ are one-step sound, soundness for both
local and global consequence is immediate: we have $\Phi; \Psi
\models^\C \phi$ (for $\C$ the class of all models) whenever $\Phi;
\Psi \entails_{\Lang\Rules} \phi$.
%The system $\Log\Rules$ is sound for global consequence in the sense
%that $\Phi\entails_{\Log\Rules}\phi$ implies $\Phi\models_g\phi$,
%and more generally for local consequence in the sense that
%$\Phi;\Psi\entails_{\Log\Rules}\phi$ implies
%$\Phi;\Psi\models\phi$ if we define
%$\Phi;\Psi\entails_{\Log\Rules}\phi$ to hold iff
%$\Phi\entails_{\Log\Rules}\psi_1\land\dots\psi_n\modimpl\phi$ for some
%$\psi_1,\dots,\psi_n\in\Psi$.
% Note that in hybrid logics with $@$, \emph{local} entailment (`if a
% state satisfies $\Phi$, then it satisfies $\phi$) can be encoded
% into global entailment by prefixing with $@_i$ for a fresh nominal
% $i$.
In~\cite{MyersEA09}, a criterion has been given for $\Log\Rules$ to be
\emph{weakly complete}, i.e.\ complete for the case where both the
TBox $\Phi$ and the set $\Psi$ of local assumptions are empty. Here,
we extend this result to combined strong \emph{global} and strong
\emph{local} completeness, i.e.\ to cover both an arbitrary TBox and
an arbitrary set of local assumptions, even if $\Log\Rules$ is
extended with pure frame conditions and local binding.

\section{Strong Completeness of Pure Extensions}

\noindent Pure completeness is a celebrated result in hybrid logic
\cite[Chapter 7.3]{BlackburnEA01}. In a nutshell, adding pure axioms
to an already complete proof system for the hybrid extension of the
modal logic $K$ (Example \ref{expl:logics}), one retains completeness
with respect to the class of frames that satisfy the additional
axioms. In contrast to arbitrary modal axioms, pure axioms do not
contain propositional variables, and therefore define -- in the
classical setting of hybrid $K$ -- first-order frame conditions. Here,
we show that the same theorem is valid for a much larger class of
logics, namely all coalgebraic hybrid logics satisfying one of two
suitable sets of conditions. For the sake of readability, we restrict
the technical development (not the examples) to the case of unary
operators from now on until Section~\ref{sec:bounded}.

\begin{defi}
  If $\Ax$ is a set of pure formulas and $\Rules$ is a set of one-step
  rules, we write $\Phi;\Psi \entails_{\Log\Rules\Ax + \Name} \phi$ if
  there are $\psi_1, \dots, \psi_n \in \Psi$ such that $\psi_1 \land
  \dots \psi_n \to \phi$ is $\Lang\Rules$-derivable from assumptions
  in $\Phi$ where additionally all substitution instances of axioms in
  $\Ax$ and the rule
\[ (\Name) \frac{i \to \phi}{\phi} (i \notin \Nom(\phi)) \]
may be used in deductions.  As before, we write $\Phi
\entails_{\Lang\Rules\Ax + \Name} \phi$ if $\Phi; \emptyset
\entails_{\Lang\Rules\Ax + \Name} \phi$.
\end{defi}

\noindent In the above system, the rule $(\mathsf{Name'})\;\,
@_i\phi/\phi$ ($i\notin\Nom(\phi)$ and the rule
\begin{equation*}
  \lrule{\NameCong}{@_j(\phi\modiff\psi)}{\hearts\phi\modiff \hearts\psi}
  \;(j\notin\Nom(\phi,\psi))
\end{equation*}
are derivable. The system is clearly sound for both global and local
consequence over $\Ax$-models in the same sense that $\Log\Rules$ is
sound over $T$-models.
% We could also postulate the rule $@_i\phi/\phi$ for
% $i\notin\Nom(\phi)$ and derive $(\mi{Name})$: From $i\modimpl\phi$,
% we derive by $@$-generalization $@_i(i\modimpl\phi)$, which using
% $@_ii$ and commutation of $@_i$ with propositional connectives
% yields $@_i\phi$.
% As $(\Name)$ is clearly sound, the following lemma is immediate by
% induction on the provability predicate.
% \begin{lem}
% Suppose that $\Rules$ is one-step sound. Then $\Phi \models_\Ax
% \phi$ whenever $\Phi \entails_{\Log\Rules\Ax + \Name} \phi$.
% \end{lem}

\iffalse
Let $\Lang$ be a hybrid logic with pure axioms. We extend our logical
system with the rule
\begin{equation*}
  \lrule{\mi{Name}}{i\modimpl\phi}{\phi}\;(i\notin\Nom(\phi)).
\end{equation*}
Then, the rule $@_i\phi/\phi$ ($i\notin\Nom(\phi)$ and the 
rule 
\begin{equation*}
  \lrule{\NameCong}{@_j(\phi\modiff\psi)}{\hearts\phi\modiff \hearts\psi}
  \;(j\notin\Nom(\phi,\psi))
\end{equation*}
are derivable.
\begin{rem}
  We could also postulate the rule $@_i\phi/\phi$ for
  $i\notin\Nom(\phi)$ and derive $(\mi{Name})$: From $i\modimpl\phi$,
  we derive by $@$-generalization $@_i(i\modimpl\phi)$, which using
  $@_ii$ and commutation of $@_i$ with propositional connectives
  yields $@_i\phi$.
\end{rem}
\fi

\begin{defi}
Let $\Ax \subseteq \Form(\Lambda)$ be a set of pure axioms, and let
$\Phi \subseteq \Form(\Lambda)$ be a TBox.
A set $\Psi \subseteq \Form(\Lambda)$ is \emph{$(\Log\Rules\Ax +
  \Name)$-$\Phi$-inconsistent} if there are $\psi_1, \dots, \psi_n \in
\Psi$ such that $\Phi \entails_{\Log\Rules\Ax + \Name} \neg(\psi_1
\land \dots \land \psi_n)$. Otherwise, $\Psi$ is \emph{$(\Log\Rules\Ax
  + \Name)$-$\Phi$-consistent}.
A subset of $@\FLang(\Lambda)$, i.e.\ a set of $@$-formulas, is called
an \emph{ABox} (again borrowing terminology from description logic). A
\emph{maximally $(\Log\Rules\Ax + \Name)$-$\Phi$-consistent ABox} is a
maximal element $K$ among the $(\Log\Rules\Ax +
\Name)$-$\Phi$-consistent ABoxes, ordered by inclusion.
For such a $K$, we write $S_K=\{K_i\mid i\in\Nom\}$, where $K_i=\{\phi
\in \Form(\Lambda) \mid @_i\phi\in K\}$, and put
$V_K(i)=\{K_i\}=\{K_j\in S_K\mid i\in K_j\}$.
\end{defi}
\noindent
For the construction of a named model, we now fix a maximally
$(\Log\Rules\Ax + \Name)$-$\Phi$-consistent ABox $K$. Later, we will
take $K$ to be a maximally consistent extension of a given set $\Phi$
of formulas, where we may assume, thanks to the rule
$(\mathsf{Name'})$, that $\Phi\subseteq @\Form(\Lambda)$. We note the
following trivial facts:
\begin{lem} \label{lemma:containment} We have $\psi \sigma \in K_i$
  for all $\psi \in \Ax$ and all substitutions $\sigma$, and moreover
  $K \cup \Phi \subseteq K_i$.
\end{lem}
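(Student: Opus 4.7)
The plan is to verify each of the three containments by a short derivation, leaning on maximality of $K$ together with closure of the calculus under $@$-generalization and the hybrid $@$-axioms. The enabling observation, which I would state up front, is that every $\Phi$-derivable $@$-formula automatically lies in any maximally $(\Log\Rules\Ax + \Name)$-$\Phi$-consistent ABox $K$: via the equivalence $\neg @_i\phi \otto @_i\neg\phi$, the negation of an $@$-formula is (provably equivalent to) an $@$-formula, so the usual Lindenbaum-style argument gives negation completeness of $K$ modulo provable equivalence, and any derivable $@$-formula whose negation is $\Phi$-inconsistent must lie in $K$.

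Granted this, I would dispatch the three claims as follows. For $\psi \in \Ax$ and any substitution $\sigma$, the instance $\psi\sigma$ is derivable in $\Log\Rules\Ax + \Name$ by definition of the system (it is a substitution instance of a pure axiom), and applying $@$-generalization yields $\entails_{\Log\Rules\Ax + \Name} @_i(\psi\sigma)$; hence $@_i(\psi\sigma) \in K$, i.e., $\psi\sigma \in K_i$. For $\theta = @_j\phi \in K$, the hybrid $@$-axioms render $@_j\phi$ provably equivalent to $@_i @_j\phi$ (the standard \emph{agreement} property, reflecting the global truth value of $@$-formulas), so $@_i @_j \phi \in K$ by maximality and $@_j\phi \in K_i$. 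Finally, for $\phi \in \Phi$, the trivial derivation $\Phi \entails_{\Log\Rules\Ax + \Name} \phi$ combined with $@$-generalization yields $\Phi \entails_{\Log\Rules\Ax + \Name} @_i\phi$, whence $@_i\phi \in K$ by maximal $\Phi$-consistency, giving $\phi \in K_i$.

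I do not anticipate real obstacles here: the lemma is explicitly flagged as recording trivial facts, and all three containments fit a single schema of \emph{produce a derivation of the appropriate $@_i$-prefixed formula, then invoke maximality of $K$}. The only point that warrants a word of explanation is the negation-completeness observation for ABoxes, which is needed to bridge the gap between maximality among ABoxes (rather than among all sets of formulas) and closure under $\Phi$-derivable $@$-formulas; the equivalence $\neg @_i\phi \otto @_i\neg\phi$ does all the work.
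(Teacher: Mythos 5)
Your proof is correct and is precisely the argument the paper intends; the paper itself gives no proof at all, introducing the lemma only as a record of ``trivial facts,'' and all three containments are indeed settled by your single schema of deriving the relevant $@_i$-prefixed formula (using that substitution instances of $\Ax$ and the global assumptions $\Phi$ are closed under $@$-generalization) and then invoking maximality of $K$ among consistent ABoxes. The one point deserving the explicit justification you give it is the middle containment $K\subseteq K_i$, which rests on derivability of the agreement property $@_j\phi\otto @_i@_j\phi$; this is not listed verbatim among the paper's $@$-axioms but is standard and clearly presupposed, so your appeal to it is appropriate.
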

\noindent
Our goal is the construction of named canonical models in the
following sense:
\begin{defi}
  A \emph{named canonical $K$-model} is a model $(S_K,\gamma,V_K)$
  such that
  \begin{equation*}
    \gamma(K_i)\in\Sem{\hearts}\hat\phi\quad
    \textrm{iff}\quad
    \hearts\phi\in K_i
  \end{equation*}
  for every nominal $i$, where $\hat\phi=\{K_j\in S_K\mid\phi\in
  K_j\}$.
\end{defi}
\noindent It is clear that named canonical models are countable, as
there are only countably many nominals.
% \noindent
% We note that the existence of named canonical models witnesses
% completeness.
%
\begin{lem}[Truth lemma for named canonical models]\label{lemma:ntl}
  If $M=(S_K,\gamma,V_K)$ is a named canonical $K$-model and $\phi$ is a
  hybrid formula, then for every $K_i\in S_K$,
  \begin{equation*}
    M,K_i \models\phi\quad\textrm{iff}\quad\phi\in K_i.
  \end{equation*}
  Hence, $M\models\Phi$, and $M$ is an $\Ax$-model.
\end{lem}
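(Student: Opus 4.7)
The proof proceeds by structural induction on $\phi$, extending $V_K$ to propositional variables by $V_K(p)=\hat p$. For a nominal $i$, the equivalence $M,K_j\models i \iff i\in K_j$ reduces to $K_i=K_j \iff i\in K_j$: the forward direction uses $@_i i\in K$, and the backward direction uses $@_i j\otto @_j i$ together with the nominal-shift axiom to conclude that the $@$-theories $\{\phi\mid @_i\phi\in K\}$ and $\{\phi\mid @_j\phi\in K\}$ coincide whenever $@_j i\in K$. The Boolean cases are routine once one observes that each $K_i$ is maximally consistent, which is inherited from the maximality of $K$ via the distributivity axioms for $@_i$ over propositional connectives.

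The two genuinely hybrid inductive steps form the heart of the argument. For the $@_j\phi$ case, the derivable equivalence $@_i@_j\phi\otto @_j\phi$ makes $@$-formulas global, so $@_j\phi\in K_i \iff @_j\phi\in K \iff \phi\in K_j$, and combining this with the induction hypothesis yields $M,K_i\models @_j\phi \iff \phi\in K_j \iff @_j\phi\in K_i$. For the modal case $\hearts\phi$, the induction hypothesis gives $\Sem{\phi}_M=\hat\phi$, whence the defining property of a named canonical $K$-model delivers $M,K_i\models\hearts\phi \iff \gamma(K_i)\in\Sem{\hearts}\hat\phi \iff \hearts\phi\in K_i$ directly.

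The two subsidiary claims then follow. Since Lemma~\ref{lemma:containment} gives $\Phi\subseteq K_i$ for every nominal $i$, the truth lemma immediately yields $M\models\Phi$. The assertion that $M$ is an $\Ax$-model is the main obstacle, as it requires $(S_K,\gamma,V')\models\psi$ under \emph{every} hybrid valuation $V'$ on $S_K$, not merely the canonical $V_K$, so the truth lemma does not apply verbatim. The point is that every state of $S_K$ is named, so any $V'$ factors through $V_K$ via some $f\colon\Nom\to\Nom$ with $V'(i)=V_K(f(i))$; since $\psi$ is pure, evaluating it under $V'$ coincides with evaluating the nominal-substituted variant of $\psi$ (with each $i$ replaced by $f(i)$) under $V_K$. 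Closure of the proof system under uniform nominal substitution, which is routine as both the axioms in $\Ax$ and the Name rule are substitution-invariant, then makes this variant derivable from $\Ax$ as well, so Lemma~\ref{lemma:containment} places it in every $K_i$, and one more application of the truth lemma completes the argument.
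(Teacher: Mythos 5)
Your proof is correct and follows essentially the same route as the paper's: a structural induction on formulas for the main equivalence, and then the observation that, because every state of $S_K$ is named, satisfaction of all (nominal-)substitution instances of the pure axioms at every $K_i$ yields frame satisfaction of $\Ax$. You merely spell out in more detail the step the paper leaves implicit, namely that an arbitrary hybrid valuation factors through $V_K$ via a nominal substitution under which derivability from $\Ax$ is preserved.
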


\noindent The last clause of the truth lemma follows from
Lemma~\ref{lemma:containment}, the crucial point being that
satisfaction of all substitution instances of $\Ax$ implies frame
satisfaction of $\Ax$ because every state in the model is denoted by
some nominal. We now establish two criteria for the existence of named
canonical models. The first criterion assumes a stronger form of
one-step completeness than the second, which instead demands that the
modalities are \emph{bounded}.
\subsection{Pure Completeness for Strongly One-Step Complete Logics}
\label{sec:sosc}
\noindent
The construction of named models hinges on the following notion of
pastedness, which assures that nominals interact correctly across the
whole model. For the rest of the section, we fix a one-step complete
rule set $\Rules$, a set $\Ax$ of pure axioms, and a set $\Phi
\subseteq \Form(\Lambda)$ of global assumptions, and we write
`consistent' instead of `$(\Log\Rules\Ax +
\Name)$-$\Phi$--consistent'.

\begin{defi}\label{def:paste-zero}
  An ABox $K$ is \emph{$0$-pasted} if whenever $@_j(\phi\modiff\psi)
  \in K$ for all nominals $j$, then $@_i(\hearts\phi\modiff \hearts\psi)\in K$ for
  all nominals~$i$.
\end{defi}
\noindent It is clear that $K$ can induce a named model only if $K$ is
$0$-pasted. The construction of pasted ABoxes requires a Henkin-like
extension of the logical language by adding new
nominals. \emph{Generally, we denote by $\Form(\Lambda)^+$ an extended
  language with countably many new nominals not appearing in
  $\Form(\Lambda)$.} We note the fact (slightly glossed over in the
literature) that this extension is conservative:
\begin{lem}\label{lem:cons}
  If $\Psi\subseteq\Form(\Lambda)$ is consistent, then $\Psi$ remains
  consistent in $\Form(\Lambda)^+$.
\end{lem}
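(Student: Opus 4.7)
The plan is to argue by contraposition: suppose $\Psi \subseteq \Form(\Lambda)$ is $(\Log\Rules\Ax+\Name)$-$\Phi$-inconsistent in $\Form(\Lambda)^+$, so there is a derivation $\mathcal{D}$ witnessing $\Phi \entails_{\Log\Rules\Ax+\Name} \neg(\psi_1 \land \dots \land \psi_k)$ for some $\psi_1,\dots,\psi_k \in \Psi$ in the extended language. The derivation $\mathcal{D}$ is a finite object, hence uses only finitely many nominals; let $N^+$ be the finite set of new nominals (those from $\Form(\Lambda)^+\setminus\Form(\Lambda)$) occurring anywhere in $\mathcal{D}$. Since $\Form(\Lambda)$ contains countably many nominals and only finitely many appear in $\mathcal{D}\cup\Psi\cup\Phi\cup\Ax$ (taking into account the finitely many axiom instances used), we can choose an injective map $\sigma: N^+ \to \Nom$ whose image consists of old nominals that are \emph{fresh}, i.e.\ do not occur anywhere in $\mathcal{D}$, $\Psi$, $\Phi$, or the finitely many instances of axioms in $\Ax$ used. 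Extend $\sigma$ to be the identity on old nominals and on propositional variables; this gives a substitution on $\Form(\Lambda)^+$ whose restriction to $\Form(\Lambda)$ is the identity.

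Next, I would apply $\sigma$ pointwise to every formula appearing in $\mathcal{D}$, obtaining a candidate derivation $\sigma(\mathcal{D})$. The point is to verify that $\sigma(\mathcal{D})$ is a valid derivation in $\Form(\Lambda)$. All axiom schemas (propositional tautologies, the hybrid axioms, \textbf{mob}, and instances of $\Ax$) and all one-step rules in $\Rules$ are closed under arbitrary nominal substitution, since they do not distinguish old from new nominals; modus ponens, $@$-generalization, and $@$-introduction likewise commute with $\sigma$. The only inference step whose validity is nontrivial under substitution is the $(\Name)$ rule: from $i\to\phi$ conclude $\phi$ provided $i\notin\Nom(\phi)$. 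I must verify that $\sigma(i)\notin\Nom(\sigma(\phi))$. Since $\sigma$ is injective on the set of nominals appearing in $\mathcal{D}$ (it is the identity on old nominals and injectively sends new nominals to old nominals chosen fresh with respect to $\mathcal{D}$, in particular fresh with respect to all old nominals in $\mathcal{D}$), we have $\sigma(i) \in \Nom(\sigma(\phi))$ iff $i \in \Nom(\phi)$; the side condition is therefore preserved. This is the main (albeit minor) technical point.

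Having shown $\sigma(\mathcal{D})$ is a valid derivation of $\sigma(\neg(\psi_1\land\dots\land\psi_k)) = \neg(\psi_1\land\dots\land\psi_k)$ (since $\psi_1,\dots,\psi_k \in \Form(\Lambda)$ are fixed by $\sigma$) from hypotheses $\sigma(\Phi)=\Phi$, with axiom instances $\sigma(\Ax\text{-instances})$ which are themselves substitution instances of axioms in $\Ax$, we obtain a witness to $\Phi \entails_{\Log\Rules\Ax+\Name} \neg(\psi_1\land\dots\land\psi_k)$ entirely within $\Form(\Lambda)$. This contradicts the assumed consistency of $\Psi$ in $\Form(\Lambda)$, completing the proof.
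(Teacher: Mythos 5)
Your proof is correct and follows essentially the same route as the paper's: contraposition, observing that the offending derivation is finite and hence uses only finitely many new nominals, and renaming these to fresh nominals of the original language. The paper's proof is just a terser version of yours; your additional check that the freshness side condition of $(\Name)$ is preserved under the renaming is the right (and only) point needing verification.
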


\begin{lem}[Extended Lindenbaum lemma for $0$-Pasted Sets]\label{lemma:ell-0}
  If $\Psi\subseteq \FLang(\Lambda)$ is consistent, then there exists
  a $0$-pasted maximally consistent ABox $K\subseteq
  @\Form(\Lambda)^+$ and a nominal $i$ in $\Form(\Lambda)^+$ such that
  $@_i\Psi \subseteq K$.
\end{lem}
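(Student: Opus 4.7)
The argument will be a Henkin-style construction tailored to $0$-pastedness, carried out in the extended language $\Form(\Lambda)^+$ and patterned on standard pure-completeness proofs for hybrid logic; the technical heart is a witness-extension lemma driven by the derived rule $(\NameCong)$.

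First I would establish the base case: by conservativity (Lemma~\ref{lem:cons}), $\Psi$ remains consistent in $\Form(\Lambda)^+$, and for a nominal $i$ fresh for $\Psi$ the set $@_i\Psi$ is still consistent. Otherwise $\Phi\entails_{\Log\Rules\Ax+\Name}@_i\neg(\psi_1\wedge\cdots\wedge\psi_n)$ for some $\psi_1,\ldots,\psi_n\in\Psi$; the $@$-introduction axiom $i\wedge\chi\to @_i\chi$ instantiated at $\chi=\neg(\psi_1\wedge\cdots\wedge\psi_n)$, combined with the axioms governing $\neg$ under $@$, yields $\Phi\entails i\to\neg(\psi_1\wedge\cdots\wedge\psi_n)$, and since $i$ is fresh an application of $(\Name)$ gives $\Phi\entails\neg(\psi_1\wedge\cdots\wedge\psi_n)$, contradicting consistency of $\Psi$.

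Next I fix an enumeration $\chi_0,\chi_1,\ldots$ of all ABox formulas over $\Form(\Lambda)^+$ (where negated $@$-formulas are identified with their provable equivalents via $\neg @_j\rho\modiff @_j\neg\rho$) and build an increasing chain $K_0\subseteq K_1\subseteq\cdots$ starting from $K_0=@_i\Psi$. At stage $n+1$, if $K_n\cup\{\chi_n\}$ is inconsistent I set $K_{n+1}=K_n$; otherwise I include $\chi_n$, and additionally, whenever $\chi_n$ has the shape $\neg @_j(\hearts\phi\modiff\hearts\psi)$, I adjoin a \emph{witness} $\neg @_k(\phi\modiff\psi)$ for a nominal $k$ fresh for $K_n$ and $\chi_n$. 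The crucial \emph{witness-consistency} step is: if $K_n\cup\{\chi_n\}$ is consistent then so is $K_n\cup\{\chi_n,\neg @_k(\phi\modiff\psi)\}$. Were it not, $K_n,\chi_n\entails @_k(\phi\modiff\psi)$; since $k$ is fresh, $(\NameCong)$ would give $K_n,\chi_n\entails\hearts\phi\modiff\hearts\psi$, whence by $@$-generalization $K_n,\chi_n\entails @_j(\hearts\phi\modiff\hearts\psi)$, contradicting the presence of $\chi_n=\neg @_j(\hearts\phi\modiff\hearts\psi)$.

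Setting $K=\bigcup_n K_n$ yields, by finite character, a consistent ABox which is maximally consistent by construction and contains $@_i\Psi$. For $0$-pastedness, if $@_i(\hearts\phi\modiff\hearts\psi)\notin K$ for some $i,\phi,\psi$, then by maximality $\neg @_i(\hearts\phi\modiff\hearts\psi)\in K$, which entered the chain at some stage together with a witness $\neg @_k(\phi\modiff\psi)\in K$; hence $@_k(\phi\modiff\psi)\notin K$, providing the required failure nominal. I expect the main obstacle to be the witness-consistency step, specifically ensuring that the freshness side-condition of $(\NameCong)$ can always be honoured throughout the induction---this is precisely what the passage to $\Form(\Lambda)^+$ with countably many reserve nominals buys us.
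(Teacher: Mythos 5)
Your proof is correct and follows essentially the same route as the paper's: pass to $\Form(\Lambda)^+$ via Lemma~\ref{lem:cons}, prefix $\Psi$ with a fresh nominal using the derived $\Name'$ rule, and run a Henkin enumeration that adjoins a witness $@_k\neg(\phi\modiff\psi)$ for fresh $k$ whenever a formula of the form $@_j\neg(\hearts\phi\modiff\hearts\psi)$ is added, the consistency of the witness step resting on the derived rule $(\NameCong)$. The only differences are cosmetic: you enumerate $@$-formulas directly and spell out the final $0$-pastedness check, which the paper leaves implicit.
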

\noindent
(The proof of the above version of the Lindenbaum lemma uses
Lemma~\ref{lem:cons}, and exploits the $\Name'$ rule to introduce the
nominal $i$.)  As we are aiming for strong completeness results,
(weak) one-step completeness as employed in weak completeness proofs
using \emph{finite} models~\cite{MyersEA09,Schroder07} is no longer
adequate. Accordingly, our first criterion assumes a stronger
condition:
\begin{defi}
  A rule set $\Rules$ is \emph{strongly one-step complete} if for
  every set $X$, every one-step consistent subset of
  $\Prop(\Lambda(\Pow(X)))$ is one-step satisfiable.
\end{defi}
%
% \noindent
% (Our second critA weaker version of strong one-step completeness that just requires
% the above property to hold for \emph{finite} sets $X$ was used in
% \cite{SchroderPattinson09}, but in the absence of nominals.)
%
\begin{lem}[Named existence lemma, Version 1] \label{lemma:nex-1}
  If $K$ is $0$-pasted and $\Rules$ is strongly one-step complete, then
  there exists a named canonical $K$-model.
\end{lem}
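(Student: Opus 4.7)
The plan is to define the coalgebra map $\gamma : S_K \to TS_K$ pointwise, producing $\gamma(K_i) \in TS_K$ for each nominal $i$ via strong one-step completeness. Specifically, I would consider the set
\[
\Xi_i = \{\hearts\hat\phi \mid \hearts\phi \in K_i\} \cup \{\neg\hearts\hat\phi \mid \neg\hearts\phi \in K_i\} \subseteq \Prop(\Lambda(\Pow(S_K)))
\]
of modal literals true at $K_i$, with each $\hat\phi$ regarded as a propositional variable and interpreted by inclusion into $\Pow(S_K)$. If $\Xi_i$ is one-step consistent, strong one-step completeness supplies $t \in TS_K$ with $t \in \lsem\xi\rsem_{TS_K}$ for every $\xi \in \Xi_i$; then by maximality of $K_i$, setting $\gamma(K_i) := t$ automatically yields $\gamma(K_i) \in \lsem\hearts\rsem(\hat\phi) \Leftrightarrow \hearts\phi \in K_i$, which is the required coherence condition for a named canonical model.

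The core of the proof is therefore showing that each $\Xi_i$ is one-step consistent. I would argue by contraposition: assume that for some finite $\Xi_i^0 \subseteq \Xi_i$ and finitely many rules $\alpha_k/\beta_k \in \Rules$ with substitutions $\sigma_k$ satisfying $S_K \models \alpha_k\sigma_k$, the set $\Xi_i^0 \cup \{\beta_k\sigma_k\}_k$ is propositionally inconsistent, and then derive a contradiction with the $(\Log\Rules\Ax + \Name)$-$\Phi$-consistency of $K_i$. The elements of $\Xi_i^0$ correspond directly to formulas in $K_i$. For the rule conclusions, the favourable case is that each $\sigma_k$ arises from a syntactic substitution $\tilde\sigma_k : \PV \to \Form(\Lambda)^+$ via $\sigma_k(p) = \widehat{\tilde\sigma_k(p)}$: then $S_K \models \alpha_k\sigma_k$ means $\alpha_k\tilde\sigma_k \in K_j$ for every nominal $j$, so the $(\mathsf{Name'})$ rule (applied to a nominal $j$ fresh to the formula) yields $K \entails \alpha_k\tilde\sigma_k$, and an application of the rule $\alpha_k/\beta_k$ delivers $K \entails \beta_k\tilde\sigma_k$, i.e.\ $\beta_k\tilde\sigma_k \in K_i$. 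Hence all formulas involved in the assumed propositional inconsistency lie in $K_i$, contradicting its consistency.

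The principal obstacle is reducing arbitrary subset substitutions to the liftable case, since a given $\sigma_k : \PV \to \Pow(S_K)$ may map variables to subsets of $S_K$ that do not arise as $\hat\psi$ for any formula $\psi$. I would deal with this by exploiting two facts: only finitely many subsets of $S_K$ appear in any given propositional inconsistency, and these generate a finite Boolean subalgebra $\mathcal{B}$ of $\Pow(S_K)$; and, since $S_K$ is indexed by the countably many nominals of $\Form(\Lambda)^+$ (which we may extend still further by Lemma~\ref{lem:cons}), each nonempty atom $b$ of $\mathcal{B}$ contains some $K_{j_b}$, so that $A \in \mathcal{B}$ can be replaced by $A'' := \widehat{\bigvee_{b \subseteq A} j_b}$, a formula-represented subset with the same Boolean relations to the other subsets in play. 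The $0$-pastedness of $K$ (Definition~\ref{def:paste-zero}) then ensures that congruence under modalities is respected when moving from $A$ to $A''$ and reading the substituted rule instances as formulas, so that the propositional inconsistency transfers faithfully into a derivation of $\bot$ from $K_i$ in $\Log\Rules\Ax + \Name$ via Lemma~\ref{lemma:containment}. I expect this representation-by-nominals step, and the careful bookkeeping needed to keep $S_K \models \alpha_k\sigma_k$ intact through the replacement, to constitute the main technical work.
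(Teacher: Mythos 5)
Your overall architecture is the paper's: define $\gamma(K_i)$ pointwise by applying strong one-step completeness to the set of modal literals of $K_i$ read over $\Pow(S_K)$, with coherence then following from maximality of $K_i$. The substance of the lemma, however, is the one-step consistency of that set, and there your argument has a genuine gap. The step ``$\alpha_k\tilde\sigma_k\in K_j$ for every nominal $j$, so $(\mathsf{Name'})$ yields $K\entails\alpha_k\tilde\sigma_k$, whence $\beta_k\tilde\sigma_k\in K_i$'' is unsound. $(\mathsf{Name'})$ is a rule of \emph{proof}: it preserves derivability from the global assumptions $\Phi$, and its soundness rests on the premise being a theorem in which the fresh nominal may be reinterpreted arbitrarily. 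It cannot be applied to mere members of a consistent ABox --- consistency of $K$ is defined via $\Phi$-refutability of finite conjunctions of elements of $K$, not by closing $K$ under the proof rules as global assumptions; the satisfiable ABox $\{@_j p,\ @_l\neg p\}$ would become ``inconsistent'' under your reading. No finite derivation extracts $\alpha_k\tilde\sigma_k$ from the infinitely many membership facts $@_j\alpha_k\tilde\sigma_k\in K$, so you never obtain $\beta_k\tilde\sigma_k\in K_i$, and the intended contradiction with the consistency of $K_i$ does not materialize. The paper avoids this entirely by first replacing the one-step rules by equivalent rank-1 \emph{axioms} (the technique of~\cite{Schroder07}): every substitution instance of an axiom is a theorem and hence lies in every $K_i$, so no rule premise ever needs to be ``derived from $K$''; the only rule-like inferences remaining in a one-step refutation are congruence steps $\hearts\rho\modiff\hearts\rho'$ for $\hat\rho=\hat\rho'$, and these are precisely what $0$-pastedness places in $K_i$.

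The second problem is the reduction from arbitrary valuations $\sigma_k:\PV\to\Pow(S_K)$ to formula-definable ones, which you yourself flag as unfinished. As set up it does not work: the map $A\mapsto A''$ selects one named representative per atom of the finite subalgebra, so it fails to commute with complement in $\Pow(S_K)$ (already $(S_K)''\neq S_K$), and consequently neither the identity of the modal atoms $\hearts(A)$ nor the premise validity $S_K\models\alpha_k\sigma_k$ survives the replacement. The paper instead invokes a result of~\cite{SchroderPattinson08b} allowing one to assume that a one-step refutation uses only instances over the Boolean subalgebra $\FA=\{\hat\phi\mid\phi\in\Form(\Lambda)\}$ of $\Pow(S_K)$, and then transfers the resulting propositional inconsistency back along $\phi\mapsto\hat\phi$ by a substitution-exchange lemma, at the cost of exactly the congruence formulas that $0$-pastedness discharges. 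Your instinct that $0$-pastedness is what pays for the non-injectivity of $\phi\mapsto\hat\phi$ is correct, but the two reductions that make it sufficient --- rules to axioms, and arbitrary instances to $\FA$-instances --- are respectively missing and broken in your write-up.
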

\noindent
In summary, we have:
\begin{thm}\label{thm:pure-ext-mcs}
  If $\Rules$ is strongly one-step complete, then every extension of
  $\Lang\Rules$ by pure axioms is both globally and locally strongly
  complete over countable hybrid models when equipped with the $\Name$
  rule.  That is, if $\Phi, \Psi \subseteq \Form(\Lambda)$ and
  $\phi\in\Form(\Lambda)$, then $\Phi; \Psi \entails_{\Lang\Rules\Ax +
    \Name} \phi$ whenever $\Phi; \Psi \models^\C \phi$, where $\C$ is
  the class of countable $\Ax$-models.

%   ; in particular, $\Phi
%   \entails_{\Lang\Rules\Ax + \Name} \phi$ whenever $\Phi \models_g^\Ax
%   \phi$.
\end{thm}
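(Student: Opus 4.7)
The plan is to prove completeness by contraposition, assembling the machinery built up in the preceding lemmas. Suppose $\Phi; \Psi \not\entails_{\Lang\Rules\Ax + \Name} \phi$; by the definition of the consequence relation, this is equivalent to saying that $\Psi \cup \{\neg\phi\}$ is $(\Lang\Rules\Ax + \Name)$-$\Phi$-consistent. My goal will be to build a countable $\Ax$-model $M \in \C$ with $M \models \Phi$ and a state $c$ with $M, c \models \Psi \cup \{\neg\phi\}$, directly contradicting $\Phi; \Psi \models^\C \phi$.

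Next, I apply the extended Lindenbaum lemma for $0$-pasted sets (Lemma~\ref{lemma:ell-0}) to $\Psi \cup \{\neg\phi\}$, moving to the extended nominal language $\Form(\Lambda)^+$. This produces a nominal $i$ and a $0$-pasted, maximally consistent ABox $K \subseteq @\Form(\Lambda)^+$ such that $@_i\psi \in K$ for every $\psi \in \Psi \cup \{\neg\phi\}$. The hypothesis that $\Rules$ is strongly one-step complete then makes the named existence lemma (Lemma~\ref{lemma:nex-1}) applicable, producing a named canonical $K$-model $M = (S_K, \gamma, V_K)$. By the truth lemma (Lemma~\ref{lemma:ntl}), $M$ globally satisfies $\Phi$ and is an $\Ax$-model; moreover, for each $\psi \in \Psi \cup \{\neg\phi\}$ we have $\psi \in K_i$, so $M, K_i \models \psi$. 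Thus $c = K_i$ is the required witness.

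The remaining point is countability of $M$, which is immediate because the carrier $S_K = \{K_j \mid j \in \Nom\}$ is indexed by the countably many nominals of $\Form(\Lambda)^+$. I do not anticipate any real obstacle in the theorem itself, since the technical heart of the argument is concentrated in Lemmas~\ref{lemma:ell-0} and~\ref{lemma:nex-1}; the strong one-step completeness hypothesis is used precisely to invoke the latter, while the $\Name$ rule drives the former by allowing us to introduce the fresh nominal $i$ and reduce an arbitrary consistent set to an ABox. The only minor caveat is that the model is built over the extended language, but satisfaction of formulas from $\Form(\Lambda)$ is unaffected by the valuation of the new nominals, so the argument remains entirely within the scope of $\C$.
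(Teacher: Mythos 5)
Your proposal is correct and follows essentially the same route as the paper's proof: reduce to satisfiability of the consistent set $\Psi\cup\{\neg\phi\}$, apply the extended Lindenbaum lemma for $0$-pasted sets to obtain $K$ and the nominal $i$, invoke the named existence lemma (Version 1) using strong one-step completeness, and conclude via the truth lemma, with countability coming from the fact that $S_K$ is indexed by the nominals. The remark about invariance of satisfaction under passage to $\Form(\Lambda)^+$ matches the parenthetical observation in the paper's own proof.
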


\begin{proof}
  As usual, we show that every
  $(\Log\Rules\Ax+\Name)$-$\Phi$-consistent set
  $\Psi\subseteq\Form(\Lambda)$ is satisfiable in a countable
  $\Ax$-model $M$ such that $M\models\Phi$ (where satisfiability is
  clearly invariant under passing from $\Form(\Lambda)$ to
  $\Form(\Lambda)^+$). The extended Lindenbaum lemma yields a
  $0$-pasted maximally consistent subset ABox $K \subseteq
  \Form(\Lambda)^+$ and a nominal $i$ in $\Form(\Lambda)^+$ such that
  $@_i \Psi \subseteq K$. By the named existence lemma, we find a
  named, hence countable, canonical $K$-model $M = (S_K, \gamma,
  V_K)$, and by the truth lemma (Lemma~\ref{lemma:ntl}), $M$ is an
  $\Ax$-model, $M\models\Phi$, and $M,K_i\models\Psi$.
  % We now turn to local completeness. Suppose that $\Phi; \Psi
  % \models^\C_l \phi$. By Lemma \ref{lemma:local-global} we have that
  % $\Phi \cup @_i \Psi \models^\C_g @_i \phi$ where $i$ is new a
  % nominal that does not occur in $\Phi, \Psi, \phi$. Hence $\Phi
  % \cup @_i \Psi \entails_{\Lang\Rules\Ax + \Name} @_i \phi$. As this
  % proof only uses finitely many formulas of $@_i \Psi$, say $@_i
  % \psi_1, \dots, @_i \psi_n$, we have that $\Phi \cup \lbrace @_i
  % \phi_1, \dots, @_i \phi_n \rbrace \entails_{\Lang\Rules\Ax +
  %   \Name} @_i \phi$. By soundness, it follows that $\Phi \cup
  % \lbrace @_i \psi_1, \dots, @_i \psi_n \rbrace \models^\C_g @_i
  % \phi$, and again by Lemma \ref{lemma:local-global} we have that
  % $\Phi; \lbrace \psi_1, \dots, \psi_n \rbrace \models_l^\C \phi$,
  % but then $\Phi \models^\C_g \psi_1 \land \dots \land \psi_n \to
  % \phi$ whence $\Phi \entails_{\Lang\Rules\Ax} \psi_1 \land \dots
  % \land \psi_n \to \phi$ so that $\Phi; \Psi
  % \entails_{\Lang\Rules\Ax + \Name} \phi$ as required.
\end{proof}

\begin{rem}
  In the literature (e.g.\ \cite[Theorem 7.29]{BlackburnEA01}), the
  above completeness theorem is sometimes phrased as ``completeness
  with respect to named models'', i.e. models where every state is the
  denotation of some nominal; such models also played a central role
  in the early development of hybrid logic by the Sofia school (see
  e.g.~\cite{PassyTinchev85}). In detail, this means that every state
  of the model is the denotation of a nominal \emph{in a language
    extended with countably many new nominals}. This extension is
  necessary, as otherwise the consistent set $\lbrace \neg n \mid n
  \in \Nom \rbrace$ would be satisfiable in a model where every state
  is named by a nominal $n \in \Nom$ of the original language, which
  is clearly impossible. Completeness with respect to models where
  every state is named by a nominal in an extended language, on the
  other hand, is an immediate consequence of completeness with respect
  to countable models.
\end{rem}
\begin{exa}
  The previous theorem establishes strong completeness results for
  pure extensions of all hybrid logics with neighbourhood semantics
  (Example~\ref{expl:logics}.\ref{item:classical}) that are defined by
  rank-1 axioms~\cite{SchroderPattinson08b}, i.e.\ modal formulas
  where the nesting depth of modalities is uniformly equal to $1$
  (such as the monotonicity axiom $\Box(a\land b)\modimpl\Box b$). For
  the monotonic cases, i.e.\ extensions of monotonic hybrid logic,
  these results are essentially known~\cite{tenCateLitak07}, while
  they seem to be new for the non-monotonic cases, i.e.\ extensions of
  classical hybrid logic not containing the monotonicity axiom,
  including, e.g., various deontic logics~\cite{Goble04}.  Moreover,
  the theorem newly proves strong completeness of the hybridization of
  coalition logic, as Theorem~3.2 of~\cite{Pauly02} essentially states
  that coalition logic satisfies strong one-step completeness.
\end{exa}

\subsection{Pure Completeness for Bounded Logics}
\label{sec:bounded}

\noindent The condition of strong one-step completeness used in the
previous section is a comparatively rare
phenomenon~\cite{SchroderPattinson08b}; the strength of the condition
becomes clear in the fact that, unlike in the classical case of Kripke
semantics, the above did not require a notion of
$1$-pastedness~\cite{BlackburnTzakova99}. We proceed to present an
alternative approach for the case where one does have an analogue of
the ($\mathsf{Paste}$-1) rule --- this is the case if the operators
are \emph{bounded}, i.e., their satisfaction hinges, in each case, on
only finitely and boundedly many states of a model. 
\begin{defi}
  A modal operator $\hearts$ is \emph{$k$-bounded} for $k \in \Nat$ with respect to a
	$\Lambda$-structure $T$ if
  for every set $X$ and every $A\subseteq X$,
  \begin{equation*}
    \Sem{\hearts}_X(A)=\textstyle\bigcup_{B\subseteq A,\#B\le k}\Sem{\hearts}_X(B).
  \end{equation*}
  (This implies in particular that $\hearts$ is monotonic.)  We say that
  $\Lambda$ is bounded w.r.t. $T$ if every modal operator $\hearts$ in
	$\Lambda$ is
  $k_\hearts$-bounded for some $k_\hearts$.
\end{defi}
\noindent
The boundedness of an operator can now be internalized in the logical
deduction system. In particular, for $k$-bounded operators $\hearts$, one
has the \emph{paste rule}
% the following \emph{bounded
%   generalization rule}
% \begin{equation*}
%   \lrule{BG}{@_{j_1}\phi\land \dots\land @_{j_{k}}\phi\modimpl 
%     @_i\neg L(j_1\lor\dots\lor j_{k})}
%   {@_i\neg L\phi}
% \end{equation*}
% is sound for $k$-bounded operators $L$ under the side condition that
% the $j_r$ are pairwise distinct fresh nominals, i.e., different from
% $i$ and not mentioned in $\phi$. Using bounded generalization, we can
% easily derive
\begin{equation*}
  \lrule{\Paste{\hearts}(k)}{@_{j_1}\phi\land \dots\land @_{j_{k}}\phi\land 
    @_i \hearts(j_1\lor\dots\lor j_{k})\modimpl\psi}
  {@_i\hearts\phi\modimpl\psi}
\end{equation*}
with the side condition that the $j_r$ are pairwise distinct fresh
nominals. We write $\Phi\entails_{\Log\Rules\Ax + \Name + \Paste}
\phi$ if $\phi$ is derivable from assumptions in $\Phi$ in the system
$\Log\Rules+\Name$ where additionally the rule $(\Paste \hearts(k))$
may be used in deductions for $k$-bounded operators $\hearts$. This induces
the notion of $(\Log\Rules\Ax + \Name + \Paste)$-$\Phi$-consistency,
which we briefly refer to as consistency as we fix $\Phi$, $\Ax$, and
$\Rules$ throughout. Again, the system is clearly sound, i.e. $\Phi;
\Psi \models^\C \phi$ whenever $\Phi; \Psi \entails_{\Log\Rules\Ax +
  \Name+\Paste} \phi$, where $\C$ is the class of $\Ax$-models.

\begin{exas}\label{expl:bounded}
  \begin{sparenumerate}
  \item \emph{Hybrid $K$.} The modal operator $\Diamond$ is
    $1$-bounded. The arising paste rule $(\mathsf{Paste}\Diamond(1))$
    is precisely the rule $(\mi{paste}\Diamond)$
    of~\cite{BlackburnTenCate06}.
  \item \emph{Graded hybrid logic.} The modal operator $\gldiamond{k}$
    is $(k+1)$-bounded. One thus has a paste rule
    \begin{equation*}
      \lrule{\mathsf{Paste}\gldiamond{k}(k+1)}{@_{j_1}\phi\land \dots\land @_{j_{k+1}}\phi\land 
        @_i \gldiamond{k}(j_1\lor\dots\lor j_{k+1})\modimpl\psi}
      {@_i\gldiamond{k}\phi\modimpl\psi}
    \end{equation*}
    with side conditions as before.
  \item \emph{Positive Presburger hybrid logic.} A Presburger operator
    $\sum a_i\cdot\#(\argument_i)\ge k$ (Example~\ref{expl:logics}) is
    $k$-bounded if the $a_i$ are positive. E.g., this still allows
    expressing the statement, generally believed to be valid in the
    German national football league, that a team that has at least
    $37$ points will not be relegated:
    $3\cdot\#\mathsf{win}+1\cdot\#\mathsf{draw}\ge
    37\modimpl\neg\mathsf{relegated}$.
  \end{sparenumerate}
\end{exas}
\noindent
%
\iffalse
\noindent If $\Lang$ is bounded, then the \emph{bounded generalization
  rule}
\begin{equation*}
  \lrule{BG}{@_{j_1}\phi\land \dots\land @_{j_{k_\hearts}}\phi\modimpl 
    @_i\neg \hearts(j_1\lor\dots\lor j_{k_\hearts})}
  {@_i\neg \hearts\phi}
\end{equation*}
is sound under the side condition that the $j_r$ are pairwise distinct
fresh nominals, i.e., different from $i$ and not mentioned in
$\phi$. In the following, the notion of \emph{consistency} is
understood w.r.t.\ the deduction system obtained by adding bounded
generalization to the system used so far. Using bounded
generalization, we can easily derive the rule\ednote{Can we? I don't
  even understand the claim to that effect in Blackburn/ten Cate. Way
  out: replace BG by paste, which maybe looks nicer anyway}
\begin{equation*}
  \lrule{paste\hearts}{@_{j_1}\phi\land \dots\land @_{j_{k_\hearts}}\phi\land 
    @_i \hearts(j_1\lor\dots\lor j_{k_\hearts})\modimpl\psi}
  {@_i\hearts\phi\modimpl\psi}
\end{equation*}
with the side condition that the $j_r$ are pairwise distinct fresh
nominals, i.e., different from $i$ and not mentioned in $\phi$ or
$\psi$.
\fi
%
\noindent 
The generalized $1$-pastedness condition for bounded operators is as follows.
\begin{defi}
  Let $\Lambda$ be bounded. An ABox $K$ is \emph{$1$-pasted} if
  whenever $\hearts$ is $k$-bounded and $@_i \hearts \phi \in K$, then $\lbrace
  @_{j_1} \phi, \dots, @_{j_k} \phi, @_i \hearts(j_1 \lor \dots \lor j_k)
  \rbrace \subseteq K$ for some nominals $j_1, \dots, j_k$. 
  %closed under bounded generalization, i.e., whenever $K$
  %propositionally entails $@_{j_1}\phi\land \dots\land
  %@_{j_{k_L}}\phi\modimpl @_i\neg L(j_1\lor\dots\lor j_{k_L})$ for all
  %nominals $j_1,\dots,j_{k_L}$, then $@_i\neg L\phi$ is in
  %$K$.
\end{defi}
\noindent
Again, it is clear that if $\Lambda$ is bounded, then $K$ can induce a
named model only of $K$ is $1$-pasted. It is easy to see that if
$\Rules$ is one-step complete and $\Lambda$ is bounded (in fact
already if $\Rules$ derives monotony for every $\hearts\in\Lambda$), then
every $1$-pasted set is also $0$-pasted
(Definition~\ref{def:paste-zero}).
\begin{lem}[Extended Lindenbaum lemma for $1$-pasted sets]\label{lemma:ell-1}
  Let $\Lambda$ be bounded. If $\Psi \subseteq \FLang(\Lambda)$ is
  consistent, then there exist a $1$-pasted maximally consistent ABox $K
  \subseteq @\Form(\Lambda)^+$ and a nominal $i$ in $\Form(\Lambda)^+$
  such that $@_i\Psi \subseteq K$, where $\Form(\Lambda)^+$ is as in
  Section~\ref{sec:sosc}.
\end{lem}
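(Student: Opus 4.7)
The plan follows the classical Lindenbaum-with-witnesses template, but the new ingredient is the paste rule $(\Paste\hearts(k))$, which takes the place of the diamond-paste rule in the Kripke-semantics proof of \cite{BlackburnTzakova99,BlackburnTenCate06}.

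First I pass to $\Form(\Lambda)^+$; by Lemma~\ref{lem:cons} the set $\Psi$ remains consistent there, and I partition the set of fresh nominals of $\Form(\Lambda)^+\setminus\Form(\Lambda)$ into two countably infinite disjoint reservoirs, one to furnish the anchoring nominal $i$ and one to supply witnesses throughout the construction. Next, I pick a nominal $i$ from the first reservoir and note that $@_i\Psi$ is consistent: if not, one derives $\entails @_i\neg\chi$ for some conjunction $\chi$ of elements of $\Psi$ (using distribution of $@_i$ over conjunction and negation), and since $i\notin\Nom(\chi)$ the rule $(\Name')$ yields $\entails\neg\chi$, contradicting consistency of $\Psi$.

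I then fix an enumeration of all formulas in $@\Form(\Lambda)^+$ and build an increasing chain $K_0\subseteq K_1\subseteq\ldots$ of finite consistent ABoxes starting with $K_0=@_i\Psi$. At stage $n+1$, looking at the $n$-th formula $\varphi$, I add $\varphi$ to $K_n$ if the result is consistent and otherwise add $\neg\varphi$ (using the fact that $\varphi,\neg\varphi\in @\Form(\Lambda)^+$ modulo the identification $\neg @_j\phi \equiv @_j\neg\phi$). Whenever the added formula has the shape $@_j\hearts\phi$ with $\hearts$ a $k$-bounded modality, I additionally adjoin the witness set $\{@_{j_1}\phi,\ldots,@_{j_k}\phi,\;@_j\hearts(j_1\lor\cdots\lor j_k)\}$, choosing $j_1,\ldots,j_k$ pairwise distinct fresh nominals from the second reservoir that do not occur in $K_n\cup\{\varphi\}$; this is always possible since each $K_n$ is finite and the reservoir is infinite.

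The main obstacle is showing that the witness step preserves consistency. Suppose for contradiction that $K_n\cup\{@_j\hearts\phi,@_{j_1}\phi,\ldots,@_{j_k}\phi,@_j\hearts(j_1\lor\cdots\lor j_k)\}$ is inconsistent. Collecting the finitely many elements of $K_n$ into a conjunction $\chi$, this means
\[
\Phi\entails_{\Log\Rules\Ax+\Name+\Paste}\;@_{j_1}\phi\land\cdots\land@_{j_k}\phi\land@_j\hearts(j_1\lor\cdots\lor j_k)\to(\chi\to\neg@_j\hearts\phi),
\]
and since the $j_r$ are fresh with respect to $\phi$, $j$, $\chi$ and $\psi:=\chi\to\neg@_j\hearts\phi$, the paste rule $(\Paste\hearts(k))$ applies and produces $\Phi\entails @_j\hearts\phi\to(\chi\to\neg@_j\hearts\phi)$, hence $\Phi\entails\chi\to\neg@_j\hearts\phi$, contradicting the consistency of $K_n\cup\{@_j\hearts\phi\}$. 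Finally I set $K=\bigcup_n K_n$; by construction $K$ is a maximally consistent ABox containing $@_i\Psi$, and the witness step guarantees that $K$ is $1$-pasted in the sense of the preceding definition.
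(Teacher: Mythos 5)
Your proof is correct and follows essentially the same route as the paper's: pass to $\Form(\Lambda)^+$ via Lemma~\ref{lem:cons}, anchor $\Psi$ at a fresh nominal $i$ using $(\Name')$, and run a Lindenbaum construction that adjoins a witness set $\{@_{j_1}\phi,\dots,@_{j_k}\phi,@_j\hearts(j_1\lor\cdots\lor j_k)\}$ with fresh nominals whenever a formula $@_j\hearts\phi$ with $\hearts$ $k$-bounded is added, with consistency of the witness step secured by $(\Paste\hearts(k))$ exactly as you spell out. The only (harmless) deviations are that you add $\neg\varphi$ where the paper simply skips an inconsistent $\varphi$, and that your parenthetical claim that each $K_n$ is finite fails when $\Psi$ is infinite --- but only finitely many witness nominals are consumed by any finite stage, so the freshness argument goes through unchanged.
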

\noindent
Bounded operators now allow us to use a weaker version of one-step
completeness.  Instead of requiring that all one-step consistent sets
are one-step satisfiable, we may restrict to \emph{finite extensions}
of propositional variables.
\begin{defi}
  We say that $\Rules$ is \emph{strongly finitary one-step complete} if
  for every set $X$, every one-step consistent subset of
  $\Prop(\Lambda(\Pfin(X)))$ is one-step satisfiable.
\end{defi}
\noindent
Clearly, any strongly one-step complete rule set is also strongly
finitary one-step complete, but the example of graded hybrid logic
witnesses that the converse is not true. We note that the weaker
criterion still fails for probabilistic logics due to inherent
non-compactness~\cite{SchroderPattinson09}; probabilistic logics also
fail to be bounded, as a given probability $p\in[0,1]$ can be split
into any number of summands. Together with boundedness, the above
condition enables a second version of the named existence lemma.
\begin{lem}[Named existence lemma, Version 2]\label{lem:nel2}
  If $\Lambda$ is bounded, $\Rules$ is strongly finitary one-step
  complete, and $K$ is $1$-pasted, then there exists a named canonical
  $K$-model.
\end{lem}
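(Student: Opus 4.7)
The plan is to construct the coalgebra structure $\gamma\colon S_K\to TS_K$ pointwise via a one-step satisfiability argument. Boundedness will allow me to restrict the one-step constraints to finite subsets of $S_K$, so that strongly finitary one-step completeness becomes applicable, while $1$-pastedness will supply the in-model witnesses needed to realize each positive modal commitment $\hearts\phi\in K_i$.

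Concretely, fix a nominal $i$. For every $\hearts\phi\in K_i$ with $\hearts$ being $k_\hearts$-bounded, invoke $1$-pastedness to obtain nominals $j_1,\dots,j_{k_\hearts}$ with $\{@_{j_r}\phi\}_r\cup\{@_i\hearts(j_1\lor\dots\lor j_{k_\hearts})\}\subseteq K$, and set $B^{\hearts\phi}_i=\{K_{j_1},\dots,K_{j_{k_\hearts}}\}\subseteq\hat\phi$. Define
\[
\Xi_i \;=\; \{\hearts B^{\hearts\phi}_i \mid \hearts\phi\in K_i\}\;\cup\;\{\neg\hearts B \mid \neg\hearts\phi\in K_i,\; B\subseteq\hat\phi,\; \#B\le k_\hearts\}\;\subseteq\;\Prop(\Lambda(\Pfin(S_K))),
\]
with the inclusion valuation $\tau\colon\Pfin(S_K)\hookrightarrow\Pow(S_K)$. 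Granting one-step consistency of $\Xi_i$ (the main point, treated below), strongly finitary one-step completeness yields $\gamma(K_i)\in TS_K$ with $\gamma(K_i)\models_{TS_K,\tau}\Xi_i$. Coherence is then automatic: for $\hearts\phi\in K_i$, semantic monotonicity (an immediate consequence of the boundedness equation $\Sem{\hearts}A=\bigcup_{B\subseteq A,\,\#B\le k_\hearts}\Sem{\hearts}B$) lifts $\gamma(K_i)\in\Sem{\hearts}B^{\hearts\phi}_i$ to $\gamma(K_i)\in\Sem{\hearts}\hat\phi$; for $\neg\hearts\phi\in K_i$, the negative clauses in $\Xi_i$ exclude $\gamma(K_i)$ from every $\Sem{\hearts}B$ with $B\subseteq\hat\phi$, $\#B\le k_\hearts$, and boundedness then excludes it from $\Sem{\hearts}\hat\phi$ as well.

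The main obstacle will be one-step consistency of $\Xi_i$. I would proceed by contraposition: assume a propositional refutation from $\Xi_i$ together with finitely many rule-conclusion instances $\psi\sigma$ (for rules $\phi/\psi\in\Rules$ whose premise $\phi\sigma$ evaluates to $S_K$ under $\tau$) and translate it back to $K$ via the substitution $\bar\sigma\colon\Pfin(S_K)\to\Form(\Lambda)^+$ defined by $\bar\sigma(\{K_{\ell_1},\dots,K_{\ell_n}\})=\ell_1\lor\dots\lor\ell_n$, followed by prefixing with $@_i$. Each positive conjunct $\hearts B^{\hearts\phi}_i$ maps to $@_i\hearts(j_1\lor\dots\lor j_{k_\hearts})\in K$ by the choice of witnesses; each negative conjunct $\neg\hearts B$ with $B=\{K_{\ell_1},\dots,K_{\ell_n}\}\subseteq\hat\phi$ maps to $@_i\neg\hearts(\ell_1\lor\dots\lor\ell_n)$, which is $K$-derivable because $@_{\ell_r}\phi\in K$ makes $\bar\sigma(B)\to\phi$ globally derivable (via the $@$-axioms and the $\Name$ rule) and monotonicity of $\hearts$---semantically valid by boundedness and hence derivable under the strongly finitary one-step completeness hypothesis---transports $@_i\neg\hearts\phi$ to $@_i\neg\hearts\bar\sigma(B)$. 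Rule-conclusion instances are handled by a parallel substitution-and-lifting argument, noting that $\bar\sigma(B)$ has truth-set $B=\tau(B)$ under $V_K$, so premises valid under $\tau$ transport to globally $K$-derivable formulas. Distributing $@_i$ over the propositional refutation then yields $@_i\bot\in K$, contradicting the consistency of $K$. The delicate points to discharge are the syntactic derivations of monotonicity inside $\Lang\Rules\Ax+\Name+\Paste$ and of the implication $\ell\to\phi$ from $@_\ell\phi$, both of which ultimately rely on the $@$-axioms together with the $\Name$ rule.
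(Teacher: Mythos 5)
Your proposal is correct and follows essentially the same route as the paper's proof: define $\gamma(K_i)$ by applying strongly finitary one-step completeness to a one-step consistent set of constraints over $\Pfin(S_K)$ whose positive part comes from the $1$-pastedness witnesses and whose negative part blocks all small subsets of $\hat\phi$, then derive coherence from boundedness and (derivable) monotonicity. The only cosmetic difference is that the paper takes as its constraint set $\{\epsilon L\{K_{j_1},\dots,K_{j_{k_L}}\}\mid \epsilon L(j_1\lor\dots\lor j_{k_L})\in K_i\}$, so the monotone-congruence derivation is pushed into the coherence argument rather than into the consistency proof, but the content is the same.
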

\noindent
Summarizing the above, we have the following extended completeness
result.
\begin{thm}\label{thm:pure-ext-bounded}
  Let $\Lambda$ be bounded, and let $\Rules$ be strongly finitary
  one-step complete.  Then every extension of $\Log\Rules$ by pure
  axioms is globally and locally strongly complete over countable
  hybrid models when equipped with the $\Name$ and $\Paste$ rules.  In
  other words, if $\Phi, \Psi \subseteq \Form(\Lambda)$,
  $\phi\in\Form(\Lambda)$, and $\C$ is the class of all countable
  $\Ax$-models, then $\Phi; \Psi \entails_{\Lang\Rules\Ax + \Name +
    \Paste} \phi$ whenever $\Phi; \Psi \models^\C \phi$.
\end{thm}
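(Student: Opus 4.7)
The plan is to mimic the argument for Theorem~\ref{thm:pure-ext-mcs}, substituting the $1$-pasted versions of the Lindenbaum and named existence lemmas for the $0$-pasted ones. As is standard, it suffices by contraposition to show that every $(\Log\Rules\Ax + \Name + \Paste)$-$\Phi$-consistent set $\Psi \subseteq \Form(\Lambda)$ is satisfied at some state of a countable $\Ax$-model $M$ with $M \models \Phi$; indeed, if $\Phi; \Psi \not\entails \phi$, then $\Psi \cup \{\neg \phi\}$ is $\Phi$-consistent, and any model that satisfies this set at some state witnesses $\Phi; \Psi \not\models^\C \phi$. Satisfiability is moreover preserved when moving to the extended language $\Form(\Lambda)^+$, by the conservativity noted in Lemma~\ref{lem:cons}.

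First I would apply the extended Lindenbaum lemma for $1$-pasted sets (Lemma~\ref{lemma:ell-1}), which is available here because $\Lambda$ is bounded and hence the $(\Paste)$ rules make sense: this yields a $1$-pasted maximally consistent ABox $K \subseteq @\Form(\Lambda)^+$ together with a nominal $i$ in $\Form(\Lambda)^+$ such that $@_i \Psi \subseteq K$. Next I invoke the named existence lemma in its second form (Lemma~\ref{lem:nel2}), whose hypotheses (boundedness of $\Lambda$, strong finitary one-step completeness of $\Rules$, and $1$-pastedness of $K$) are exactly the assumptions at hand; this gives a named canonical $K$-model $M = (S_K, \gamma, V_K)$. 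Being named, $M$ is countable. The truth lemma for named canonical models (Lemma~\ref{lemma:ntl}) then delivers simultaneously the three facts we need: $M$ is an $\Ax$-model, $M \models \Phi$, and $M, K_i \models \phi$ for every $\phi \in K_i$; since $@_i \Psi \subseteq K$ implies $\Psi \subseteq K_i$, we conclude $M, K_i \models \Psi$, as required.

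The only subtle point, which I would flag explicitly, is that the truth lemma is stated for named canonical $K$-models without reference to which pastedness condition produced $K$: this works here precisely because $1$-pastedness entails $0$-pastedness (as noted just before Lemma~\ref{lemma:ell-1}, using that one-step completeness together with boundedness yields monotony of each $\hearts$), so the named existence lemma's hypothesis that $K$ is $0$-pasted is implicitly met and the proof of coherence of $\gamma$ that underlies Lemma~\ref{lemma:ntl} applies unchanged. The main obstacle of the whole argument really lives inside Lemma~\ref{lem:nel2}: it is there that $1$-pastedness is used to turn $@_i\hearts\phi$ into concrete witnessing nominals so that strong \emph{finitary} one-step completeness suffices to build the coalgebra structure $\gamma$ on $S_K$. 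In the present theorem, that work is already packaged, and the remaining assembly is routine.
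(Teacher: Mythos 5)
Your proof is correct and follows exactly the route the paper indicates for this theorem: the same contrapositive reduction as in Theorem~\ref{thm:pure-ext-mcs}, with the $1$-pasted extended Lindenbaum lemma (Lemma~\ref{lemma:ell-1}) and the second named existence lemma (Lemma~\ref{lem:nel2}) substituted for their $0$-pasted counterparts, followed by the truth lemma (Lemma~\ref{lemma:ntl}). Your flagged ``subtle point'' is harmless but unnecessary, since the truth lemma is stated for arbitrary named canonical $K$-models (i.e.\ any coherent coalgebra structure on $S_K$), which is exactly what Lemma~\ref{lem:nel2} delivers.
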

\noindent
The proof follows the same route via extended Lindenbaum lemma,
existence lemma, and truth lemma as for
Theorem~\ref{thm:pure-ext-mcs}.
\begin{exa}
  By Example~\ref{expl:bounded} and the fact that the known complete
  axiomatizations of the associated modal logics are in fact strongly
  finitary one-step complete, the previous theorem proves completeness
  of pure extensions of hybrid $K$, graded hybrid logic, and positive
  Presburger hybrid logic. Except for the standard case of hybrid $K$,
  these results seem to be new. In particular, we obtain completeness
  of pure extensions of graded (or positive Presburger) hybrid logic
  defining the following frame classes in multigraph semantics:
  \begin{itemize}
  \item The class of \emph{Kripke frames}, seen as the class of
    multigraphs where the transition multiplicity between two
    individual states is always at most $1$, defined by the pure axiom
    $\neg\gldiamond{1}i$.
    \item The class of \emph{reflexive} multigraphs,  defined by the pure
      axiom $i\modimpl\gldiamond{0}i$.
    \item The class of \emph{transitive} multigraphs, defined by the
      pure axioms $\gldiamond{0}\gldiamond{n}i\modimpl\gldiamond{n}i$,
      $n\ge 0$.
    \item The class of \emph{symmetric} multigraphs, i.e., those where
      the transition multiplicity from $x$ to $y$ always equals the
      one from $y$ to $x$, which is defined by the pure axioms
      $i\land\gldiamond{k}j\modimpl @_j\gldiamond{k}i$.
    \end{itemize}
  Other frame classes of interest, see e.g. \cite[Section
	7.3]{BlackburnEA01}, can be characterized similarly by translating
	the corresponding frame conditions from Kripke to multigraph
	semantics.
  %\end{itemize}
\end{exa}

\subsection{The Mixed Case}

\noindent In some cases, the two methods laid out in the preceding
sections can be combined for modal operators with several arguments
that adhere, in each of their arguments, to one of the respective sets
of semantic conditions. For the sake of readability, we formulate this
explicitly only for the mixed binary case with a single modal
operator, i.e.\ we assume in this section that $\Lambda=\{\hearts\}$ with
$\hearts$ binary; the generalization to arbitrary numbers of arguments,
several modal operators etc.\ should be obvious, and essentially only
requires more elaborate terminology and notation.
\begin{defi}
  We say that $\Rules$ is \emph{(strongly, strongly finitary) one-step
    complete} if every one-step consistent subset of
  $\Prop(\Lambda(\Pow(X)\times\Pfin(X)))$ is one-step satisfiable.
  Moreover, we say that $\hearts$ is \emph{$k$-bounded in the second
    argument} for $k\in\Nat$ if for every set $X$ and all
  $A,B\subseteq X$,
  \begin{math}
    \Sem{\hearts}_X(A,B)=\bigcup_{C\subseteq A, \#C\le k}\Sem{\hearts}_X(A,C).
  \end{math}
\end{defi}
\noindent In the same manner as for Theorems~\ref{thm:pure-ext-mcs}
and~\ref{thm:pure-ext-bounded}, we derive:
\begin{thm}\label{thm:pure-ext-mixed}
  If $\Rules$ is (strongly, strongly finitary) one-step complete and
  $\hearts$ is $k$-bounded in the second argument, then every extension of
  $\Log\Rules$ by pure axioms is both locally and globally strongly
  complete over countable hybrid models when equipped with the
  appropriate $\Name$ and $\Paste$ rules.
\end{thm}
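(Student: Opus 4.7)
The plan is to transcribe the proofs of Theorems~\ref{thm:pure-ext-mcs} and~\ref{thm:pure-ext-bounded} argument-by-argument, applying strong one-step completeness to the first (unbounded) argument and boundedness plus strongly finitary one-step completeness to the second. First I would introduce the paste rule adapted to the second argument,
\begin{equation*}
  \lrule{\Paste{\hearts}(k)}{@_{j_1}\psi\land\dots\land @_{j_k}\psi\land @_i\hearts(\phi,j_1\lor\dots\lor j_k)\to\chi}{@_i\hearts(\phi,\psi)\to\chi}
\end{equation*}
with $j_1,\dots,j_k$ pairwise distinct and fresh, together with the corresponding \emph{mixed} notion of pastedness: an ABox $K$ is mixed-pasted if it is $0$-pasted in the first argument in the sense of Definition~\ref{def:paste-zero} (i.e.\ whenever $@_j(\phi\otto\phi')\in K$ for all nominals $j$, then $@_i(\hearts(\phi,\psi)\otto\hearts(\phi',\psi))\in K$ for all $i,\psi$) and $1$-pasted in the second, i.e.\ whenever $@_i\hearts(\phi,\psi)\in K$ there exist nominals $j_1,\dots,j_k$ with $\{@_{j_1}\psi,\dots,@_{j_k}\psi,@_i\hearts(\phi,j_1\lor\dots\lor j_k)\}\subseteq K$.

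Next I would prove a mixed extended Lindenbaum lemma producing, for any consistent $\Psi\subseteq\Form(\{\hearts\})$, a mixed-pasted maximally consistent ABox $K\subseteq @\Form(\{\hearts\})^+$ and a nominal $i$ with $@_i\Psi\subseteq K$. The construction interleaves the enumerations from Lemmas~\ref{lemma:ell-0} and~\ref{lemma:ell-1}: $0$-pastedness in the first argument is enforced using $(\NameCong)$ and congruence-style closure, while at each stage that demands witnesses for a new $\hearts$-formula one introduces $k$ fresh nominals via the $(\Paste\hearts(k))$ rule, preserving consistency by Lemma~\ref{lem:cons}.

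The core step is the mixed named existence lemma. Given a mixed-pasted maximally consistent $K$, take $S_K$ and $V_K$ as in Section~\ref{sec:sosc} and, for each nominal $i$, form
\begin{equation*}
  \Xi_i = \{\hearts(\hat\phi,\{K_{j_1},\dots,K_{j_k}\})\mid @_i\hearts(\phi,\psi)\in K,\ j_1,\dots,j_k\text{ paste witnesses}\}
  \cup\{\lnot\hearts(\hat\phi,\hat\psi)\mid @_i\lnot\hearts(\phi,\psi)\in K\},
\end{equation*}
a subset of $\Prop(\Lambda(\Pow(S_K)\times\Pfin(S_K)))$. By the mixed one-step completeness hypothesis, it suffices to show $\Xi_i$ is one-step consistent; the resulting one-step model provides $\gamma(K_i)\in TS_K$ and hence a coherent coalgebra. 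One-step consistency is established by the argument of Lemmas~\ref{lemma:nex-1} and~\ref{lem:nel2}: any one-step derivation of $\bot$ from $\Xi_i$ in $\Log\Rules$ lifts, via $\NameCong$ in the first argument and monotony in the second (which is derivable from $(\Paste\hearts(k))$ by routine syntactic manipulation), to a derivation of $\bot$ from $K$ in $\Log\Rules\Ax+\Name+\Paste$, contradicting maximal consistency. The truth lemma (Lemma~\ref{lemma:ntl}) then transfers verbatim, and countable strong completeness for $\Ax$-models follows via the Lindenbaum/existence/truth chain exactly as in Theorem~\ref{thm:pure-ext-mcs}.

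The main obstacle is balancing the two arguments in this one-step consistency proof: one must simultaneously treat the first argument as a faithful representative, up to propositional congruence induced by $\NameCong$, of an arbitrary element of $\Pow(S_K)$, while restricting the second argument to the finite witness sets supplied by $1$-pastedness. This hinges on two admissibility facts which I would verify first — that monotony in the second argument is derivable from $(\Paste\hearts(k))$ alone, and that the first-argument congruence argument of Lemma~\ref{lemma:nex-1} survives unchanged in the presence of a fixed second argument. Once these are in hand, the rest of the proof is essentially a transcription of the arguments already given in Sections~\ref{sec:sosc} and~\ref{sec:bounded}.
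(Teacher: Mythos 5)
Your overall strategy --- a paste rule adapted to the second argument, a mixed notion of pastedness ($0$-pasted in the first argument, $1$-pasted in the second), a combined extended Lindenbaum lemma, and a mixed named existence lemma feeding into the usual truth-lemma chain --- is exactly what the paper means by ``in the same manner as for Theorems~\ref{thm:pure-ext-mcs} and~\ref{thm:pure-ext-bounded}''; the paper gives no further detail for this theorem, and your reconstruction has the right shape.

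There is, however, a genuine flaw in your existence lemma as written: the set $\Xi_i$ you define is \emph{not} a subset of $\Prop(\Lambda(\Pow(S_K)\times\Pfin(S_K)))$, because your negative formulas $\lnot\hearts(\hat\phi,\hat\psi)$ carry $\hat\psi$, an arbitrary and typically infinite subset of $S_K$, in the position that must be finite. The hypothesis of (strong, strongly finitary) one-step completeness therefore does not apply to $\Xi_i$. The repair is to follow the proof of Lemma~\ref{lem:nel2} more closely: include in $\Xi_i$ only formulas $\epsilon\hearts(\hat\phi,\{K_{j_1},\dots,K_{j_k}\})$ for $\epsilon\hearts(\phi,j_1\lor\dots\lor j_k)\in K_i$, with $\epsilon$ either nothing or negation, so that the second component is always a named finite set, and then recover coherence for arbitrary $\hearts(\phi,\psi)$ afterwards from $k$-boundedness in the second argument, $1$-pastedness, maximality of $K$, and monotony. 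Relatedly, your ``admissibility fact'' that monotony in the second argument is derivable from $(\Paste\hearts(k))$ alone is doubtful: applying the paste rule to reduce $@_i\hearts(\phi,\psi)\modimpl @_i\hearts(\phi,\psi')$ leaves you needing monotony again to pass from the witness disjunction back up to $\psi'$. The paper instead obtains monotony from one-step completeness of $\Rules$, which must derive it because boundedness makes $\hearts$ semantically monotone in that argument. With these two repairs the rest of your argument goes through as you describe.
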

\begin{exa}
  Hybrid $\CK$ (Example~\ref{expl:logics}) is easily seen to be
  (strongly, strongly finitary) one-step complete, and the operator
  $>$ defined from the conditional operator $\CondArrow$ by
  $a>b:\modiff \neg(a\CondArrow \neg b)$ is $1$-bounded in the second
  argument. By the above, it follows that every pure extension of
  hybrid $\CK$ is strongly complete over countable hybrid selection
  function models. E.g.\ we may define the class of conditional frames
  where all expressible conditions induce transitive relations by pure
  axioms $(\phi>\phi>i)\modimpl(\phi>i)$.  Such frames satisfy also
  the dual axiom (using a propositional variable $a$) $(\phi\CondArrow
  a)\modimpl(\phi\CondArrow(\phi\CondArrow a))$, an axiom for
  duplicating conditional assumptions. Similar statements apply to a
  combination of graded and conditional logic (obtainable
  compositionally using the methods of~\cite{SchroderPattinson07}),
  which has operators of the form $a\CondArrow_k b$ ``if $a$, then one
  normally has more than $k$ instances of $b$''. 

  The semantics of conditional logics in general has complex
  ramifications, involving, e.g., preference orderings or systems of
  spheres (see, e.g.,~\cite{FriedmanHalpern94,Sano07}); application of
  our methods to conditional logics beyond $\CK$ is the subject of
  further investigation. We note that pure completeness of a hybrid
  extension of Lewis' logic of counterfactuals has been established
  recently~\cite{Sano07}.
\end{exa}

\section{Local Binding}

\noindent We next investigate completeness of a stronger hybrid
language that includes the $\downarrow$ binder, which binds a state
variable to the current state. Concretely, we allow formulas of the
form $\downarrow i.\,\phi$, wherein the nominal $i$ is locally bound
(for compactness of presentation, we give up the usual distinction
between nominals and state variables). Given a modal similarity type
$\Lambda$, we write $\Form_\downarrow(\Lambda)$ for the ensuing
extension of $\Form( \Lambda)$. The reading of the formula $\downarrow
i. \phi$ is ``$\phi$ holds for the current state $i$''. The
satisfaction relation in the extended logic is defined by an
additional clause for the $\downarrow$ binder,
\begin{equation*}
  (C,\gamma,V)\models \downarrow i.\,\phi\textrm{ iff }(C,\gamma,V[c/i])\models\phi
\end{equation*}
where $c$ is a state in a coalgebra $C$ and $V[c/i]$ is obtained from
$V$ by modifying the value of $i$ to $c$.  The semantics of the
$\downarrow$ binder immediately translates into the axiom scheme (see
e.g.~\cite{BlackburnTenCate06})
\begin{equation*}
  (\mathsf{DA})\qquad @_i((\downarrow j.\,\phi)\modiff\phi[i/j]).
\end{equation*}
Given a set $\Rules$ of $\Lambda$-rules, a set $\Phi \subseteq \Form_\downarrow(\Lambda)$ of formulas
and a set $\Ax \subseteq @\Form_\downarrow(\Lambda)$ of pure axioms,
we write $\Phi \entails_{\Log\Rules\Ax + \Name+\Paste + \mathsf{DA}}
\phi$
for the
extension of the associated provability predicate
$\entails_{\Lang\Rules\Ax + \Name + \Paste}$ with $(\mathsf{DA})$.
Using $(\mathsf{DA})$, one easily proves an
extension of the truth lemma for named models (Lemma~\ref{lemma:ntl})
to $\Form_\downarrow(\Lambda)$, so that the completeness results for pure
extensions proved before
(Theorems~\ref{thm:pure-ext-mcs},~\ref{thm:pure-ext-bounded},
and~\ref{thm:pure-ext-mixed}) transfer immediately to
$\Lang_\downarrow$. We make this explicit for the bounded case:
\begin{thm}\label{thm:pure-ext-binder}
  If $\Lambda$ is bounded and $\Rules$ is strongly finitary one-step
  complete, then every pure extension of $\Lang_\downarrow$ is
  strongly locally and globally complete over countable hybrid
  models. In other words, $\Phi; \Psi \models^\C \phi$ iff $\Phi; \Psi
  \entails_{\Lang\Rules\Ax + \Name + \Paste + \mathsf{DA}} \phi$ for
  all $\phi \in \Form_\downarrow(\Lambda)$ and all
  $\Phi,\Psi\subseteq\Form(\Lambda)$, where $\C$ is the class of all
  countable $\Ax$-models.
\end{thm}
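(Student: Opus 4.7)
The plan is to follow the same three-step route used for Theorem~\ref{thm:pure-ext-bounded} (extended Lindenbaum lemma, named existence lemma, truth lemma), adapted to the extended formula set $\Form_\downarrow(\Lambda)$, and to show that the only genuinely new work happens in the truth lemma. Soundness of $(\mathsf{DA})$ is direct: by the semantic clause for $\downarrow$, $(C,\gamma,V),c\models @_i(\downarrow j.\,\phi)$ iff $(C,\gamma,V[V(i)/j]),V(i)\models \phi$ iff $(C,\gamma,V),V(i)\models\phi[i/j]$ (by a routine semantic substitution lemma), iff $(C,\gamma,V),c\models @_i(\phi[i/j])$, so $(\mathsf{DA})$ is valid in every hybrid model.

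For completeness, first observe that the whole Lindenbaum construction (Lemma~\ref{lemma:ell-1}) extends verbatim to $@\Form_\downarrow(\Lambda)^+$: the conservativity fact analogous to Lemma~\ref{lem:cons} still holds, the $1$-pasting witnesses for $@_i\hearts\phi\in K$ do not require $\phi$ to be binder-free, and the $\Name'/\Name$ rule is unchanged. So we obtain a $1$-pasted maximally consistent ABox $K\subseteq @\Form_\downarrow(\Lambda)^+$ containing $@_i\Psi$, and, for each $K_i\in S_K$, $K\cup\Phi\subseteq K_i$ together with all substitution instances of $\Ax$. The named existence lemma (Lemma~\ref{lem:nel2}) goes through with no modification at all, since the coalgebra structure $\gamma$ on $S_K$ is built purely from the occurrences of modal formulas $\hearts\phi$ in the $K_i$ via strong finitary one-step completeness of $\Rules$ and $1$-pastedness; binders are never inspected in that step. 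The resulting structure $M=(S_K,\gamma,V_K)$ is countable and, by satisfaction of all substitution instances of $\Ax$ in every named state, an $\Ax$-model.

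The heart of the argument is the extension of the truth lemma to $\Form_\downarrow(\Lambda)$, proved by induction on formula complexity (where nominal-for-nominal substitution leaves complexity unchanged). All old cases are as in Lemma~\ref{lemma:ntl}. The new case $\downarrow j.\,\phi$ unwinds as
\begin{equation*}
M,K_i\models \downarrow j.\,\phi
\iff (S_K,\gamma,V_K[K_i/j]),K_i\models \phi
\iff M,K_i\models \phi[i/j],
\end{equation*}
where the second step is the standard semantic substitution lemma (which uses only that $V_K(i)=\{K_i\}$ and that $i$ is a nominal). By the inductive hypothesis applied to $\phi[i/j]$, the right-hand side is equivalent to $\phi[i/j]\in K_i$; using $(\mathsf{DA})$ together with maximal consistency and the $@$-axioms, this is in turn equivalent to $\downarrow j.\,\phi\in K_i$. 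Combining the three lemmas in the same way as in the proof of Theorem~\ref{thm:pure-ext-mcs} gives the desired local/global strong completeness over countable $\Ax$-models.

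The main obstacle I anticipate is bookkeeping rather than mathematics: one has to discharge the usual variable-capture side conditions in the semantic substitution step (by working up to $\alpha$-equivalence, or by invoking the freshness reservoir of $\Form_\downarrow(\Lambda)^+$), and one has to justify that $(\mathsf{DA})$ together with the basic $@$-calculus really yields $\downarrow j.\,\phi\in K_i \iff \phi[i/j]\in K_i$ for nominals $i$ that may themselves be among the new nominals introduced in the Lindenbaum stage. Both points are standard but are the places where the extension to $\downarrow$ is not completely automatic.
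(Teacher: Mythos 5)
Your proposal is correct and matches the paper's intended argument: the paper likewise derives Theorem~\ref{thm:pure-ext-binder} by extending the truth lemma for named models to $\Form_\downarrow(\Lambda)$ via the axiom $(\mathsf{DA})$ (exploiting that every state is named, so that $V_K(i)=\{K_i\}$ makes the substitution step work), while the Lindenbaum and named-existence lemmas carry over unchanged because they never inspect binders. Your additional care about the induction measure under nominal-for-nominal substitution and the variable-capture side conditions fills in details the paper leaves implicit but does not change the route.
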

%
% \begin{rem}
%   In the case of hybrid $K$, the $\downarrow$ binder allows replacing
%   the so-called unorthodox rules $(\mathsf{Name})$ and
%   $(\mathsf{Paste}\hearts(k))$ (called unorthodox because their side
%   conditions refer to freshness of nominals) with suitable
%   axioms. While this works without change in the case of $(\Name)$ in
%   the general setting of coalgebraic hybrid logic, it remains an open
%   problem to find an axiom replacing the coalgebraic paste rule.
%\end{rem}

\begin{rem}
  As noted in~\cite{tenCateLitak07}, the named model construction more
  generally yields completeness for any \emph{locally definable}
  extension of the hybrid language, i.e.\ any extension whose
  semantics \emph{at named states} is defined by a formula similar to
  $(\mathsf{DA})$.
\end{rem}

\begin{exa}
  Continuing Example~\ref{expl:bounded},
  Theorem~\ref{thm:pure-ext-binder} reproves not only the known
  completeness of pure extensions of hybrid $K$ with $\downarrow$, but
  also the completeness of pure extensions of graded (or positive
  Presburger) hybrid logic with $\downarrow$. This extends easily
  to the multi-agent case, or, in description logic terminology, to
  description logics with multiple roles. As, moreover, both a role
  hierarchy and transitivity of roles can be defined using pure
  axioms, we thus arrive at a complete axiomatization of an extension
  of the description logic $\mathcal{SHOQ}$ with satisfaction
  operators and $\downarrow$, which has been used in connection with
  conjunctive query answering~\cite{GlimmEA06}, and allows, e.g.,
  talking about the number of stepchildren of a stepmother, in
  continuation of the stepmother example from~\cite{Marx02}, .
\end{exa}

\section{Conclusions}

\noindent We have laid out two criteria for the existence of named
canonical models in coalgebraic hybrid logics --- one that applies to
cases where one has an analogue of the so-called Paste-$1$ rule of
standard hybrid logic, and one which applies to cases where one does
not need any such rule. While the latter means essentially that the
logic is equipped with a neighbourhood semantics, the former requires
that all modal operators of the logic are bounded, i.e.\ there is
always only a bounded number of states relevant for their satisfaction
at each point. Our main novel example of this type is graded hybrid
logic (and an extension of it using certain Presburger
modalities~\cite{DemriLugiez06}). The named model construction entails
completeness of pure extensions and completeness of extended hybrid
languages with the local binder $\downarrow$ (of which the I--me
construct of~\cite{Marx02} is a single-variable restriction), which we
thus obtain as new results for, e.g., hybrid coalition logic, hybrid
classical modal logic, several hybrid deontic logics, hybrid
conditional logic, graded hybrid logic, and an extension of the
description logic $\mathcal{SHOQ}$. An open question that remains is
the existence of so-called orthodox
axiomatizations~\cite{BlackburnTenCate06} in the presence of
$\downarrow$, as well as to find an analogue of the characterization
result of~\cite{tenCateLitak07} stating that a variant of the
Paste-$1$ rule characterizes the Kripke models among the topological
models of $S4$. A further topic of investigation is to find decidable
fragments of the language with $\downarrow$; we note slightly
speculatively that the fragment used in~\cite{Marx02} may, in our
terminology, be seen as requiring that a suitably defined NNF of a
formula contains only positive occurrences of bound nominals under
bounded modal operators.

 \bibliographystyle{myabbrv} \bibliography{schroder}

\end{document}